\documentclass[nolineno,a4paper,UKenglish,thm-restate]{socg-lipics-v2021}

\hideLIPIcs  

\graphicspath{{./Figures/}}
\Copyright{Todor Anti\'c \and Anna Margarethe Limbach  \and Niloufar Fuladi \and Pavel Valtr}

\ccsdesc[500]{Mathematics of computing~Discrete Mathematics~Graph Theory~Graphs and surfaces} 

\usepackage{tikz}
\usetikzlibrary{calc}
\usepackage{amsfonts}
\usepackage{amsmath}
\usepackage{amssymb}
\usepackage{amsthm}
\usepackage{thmtools}
\usepackage{graphicx} 
\usepackage{url}
\usepackage{tcolorbox} 
\usepackage{hyperref,cleveref} 

\newcommand{\commentOut}[1]{}

\title{Hypercube drawings with no long plane paths}

\author{Todor {Anti\'c}}{Department of Applied Mathematics, Faculty of Mathematics and Physics, Charles University, Czech Republic}{todor@kam.mff.cuni.cz}{https://orcid.org/0009-0008-6521-7987}{supported by grant no. 23-04949X of the Czech Science Foundation (GA\v{C}R) and by SVV–2023–260699.}

\author{Niloufar Fuladi}{Computer Science Institute, Faculty of Mathematics and Physics, Charles University, Czech Republic}{nfuladi@iuuk.mff.cuni.cz}{https://orcid.org/0009-0004-5870-2504}{}

\author{Anna Margarethe Limbach}{Computer Science Institute, Faculty of Mathematics and Physics, Charles University, Czech Republic}{limbach@iuuk.mff.cuni.cz}{https://orcid.org/0000-0002-3606-153X}{}

\author{Pavel Valtr}{Department of Applied Mathematics, Faculty of Mathematics and Physics, Charles University, Czech Republic}{valtr@kam.mff.cuni.cz}{https://orcid.org/0000-0001-9267-3605}{supported by grant no. 23-04949X of the Czech Science Foundation (GA\v{C}R).}

\acknowledgements{We would like to thank organizers and participants of KAMAK 2025 workshop where this research was initiated for creating a pleasant working atmosphere. }

\authorrunning{T. Anti\'c, N. Fuladi,  A. M. Limbach, P. Valtr}

\usepackage{apptools}
\newcommand{\restateref}[1]{\IfAppendix{\hyperref[#1]{$\star$}}{\hyperref[#1*]{$\star$}}}

\keywords{Hypercube, Plane graph, Convex-geometric graph, Rectilinear drawing, Simple drawing }

\begin{document}

\maketitle

\begin{abstract}
We study the existence of plane substructures in drawings of the $d$-dimensional hypercube graph $Q_d$. 
We construct drawings of $Q_d$ which contain no plane subgraph with more than $2d-2$ edges, no plane path with more than $2d-3$ edges, and no plane matching of size more than $2d-4$. On the other hand, we prove that every rectilinear drawing of $Q_d$ with vertices in convex position contains a plane path of length $d$ (if $d$ is odd) or $d-1$ (if $d$ is even). We also prove that if a graph $G$ is a plane subgraph of every drawing of $Q_d$ for a sufficiently large $d$, then $G$ is necessarily a forest of caterpillars.
Lastly, we give a short proof of a generalization of a result by Alpert et al. [Cong. Numerantium, 2009] on the maximum rectilinear crossing number of $Q_d$.
\end{abstract}

\section{Introduction}

Among the first graphs introduced in any undergrad graph theory course are complete graphs, complete bipartite graphs and $d$-dimensional hypercube graphs (which we refer to simply as $d$-cubes). It is therefore not surprising that drawings of complete~\cite{Aichholzer2024,completerectcrossing,Gioan2022,Kynl2009,brego2013,Arroyo2021,Arroyo2021a,Balko2025} and complete bipartite~\cite{CardinalFelsner2018,deKlerk2014,Aichholzer2023, Lenhart2023} graphs  are a frequent topic of research in graph drawing, with many results and interesting problems. However, one might find surprising that drawings of $d$-cubes have not gained such interest. As far as we are aware, except for various versions of crossing number problems~\cite{cuberectcrossing,Kainen1972,Skora1992,MR1775301,Shahrokhi1997,Faria2008,kainen2025,Hammack2023}, drawings of $d$-cubes have not been studied, yet. 

The $d$-cube $Q_d$ is a graph with $2^d$ vertices labeled by binary strings of length $d$ in which two vertices are connected by an edge if their labels differ exactly in one digit. A \emph{rectilinear drawing} is a drawing of a graph with vertices represented by points in the plane and edges represented by line segments between the vertices. We say that a drawing is \emph{plane} if no two edges cross. We call a drawing \emph{convex-geometric} if it is rectilinear and its vertices are in convex position. We call a drawing \emph{simple} if edges are represented by simple curves with no self intersections, and every two curves intersect in at most one point which is either a common vertex or a proper crossing in the interior of both edges (no tangential ``touches''). 

We investigate the problem of finding plane subgraphs in (rectilinear) drawings of $Q_d$. This problem has been considered for complete and complete bipartite graphs extensively. If we are trying to find a single plane Hamiltonian path (or cycle), the problem is almost trivial for rectilinear drawings of $K_n$. However, if we relax the setting slightly and consider simple drawings, the problem becomes much harder. In fact, the existence of a plane Hamiltonian path in every simple drawing of $K_n$ was conjectured in 1988 by Rafla~\cite{Alsolami2012Auth}, and even after almost $40$ years of effort by multiple authors~\cite{Aichholzer2024,Bergold2025,Pach2005,Suk2023,Fulek2014,Garca2023,GarcaOlaverri2021,Pach2003,keszegh}, the best known result is that every simple drawing of $K_n$ contains a plane path of length $O(\log(n)/\log\log(n))$~\cite{Aichholzer2024}. We mention that finding large plane matchings, cycles, and trees has also been considered in the literature~\cite{Pach2003,Garca2023,Aichholzer2024}. In the case of rectilinear drawings, since finding one plane Hamiltonian path is easy, finding multiple such paths has been considered~\cite{Kindermann2023,arxiv}.

Finding plane spanning paths in rectilinear drawings of complete bipartite graphs has proven to be an interesting problem (without considering more relaxed drawings). It is known that there are drawings with no plane path of length more than $(4-2\sqrt{2})n + o(n)$~\cite{Cska2022}, where $n$ is the total number of vertices. On the other hand, it is known that a plane path of length $\frac{n}{2}$ always exists~\cite{soukup2024}. Since plane Hamiltonian paths do not exist in every drawing of the complete bipartite graph, several authors tried to give sufficient and necessary conditions for a drawing to contain a plane Hamiltonian path~\cite{Cibulka2009,mulzer2020,Cska2022,Ricci2025,Kynl2008,soukup2024}.  

\subsection{Our results}

We start by considering plane subgraphs of drawings of hypercubes in the most restrictive setting, where we only consider convex-geometric drawings. As our first result, we show that every convex-geometric drawing of $Q_d$ contains a plane path of \emph{length} $d$ (that is, a path with $d$ edges). Actually, this is an easy consequence of a famous theorem of Perles, mentioned in~\cite{perles}. We include a proof of this theorem for the purpose of self-containment.

\begin{restatable}{theorem}{longplanepathsCor}
\label{cor_longplanepaths}
For every $d\ge 2$, any convex-geometric drawing of $Q_d$ contains a plane path of length at least $d$ if $d$ is odd, and $d-1$ if $d$ is even.
\end{restatable}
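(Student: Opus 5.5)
The plan is to derive the statement from a Perles-type extremal bound for convex-geometric graphs, via a zigzag-path argument that we reprove for self-containment. Place the $n=2^d$ vertices of the drawing on a circle, labelled cyclically. In a convex-geometric drawing, two edges cross exactly when their endpoints alternate around the circle. The key geometric fact is about \emph{zigzag} paths $v_0v_1\dots v_k$ in which each new edge $v_iv_{i+1}$ is the edge at $v_i$ that is angularly adjacent to $v_{i-1}v_i$, and the turning side alternates (left, right, left, \dots). Such a path winds monotonically around the circle and stays on one side of each of its earlier edges, so it is plane. We will verify this by a short induction: all later vertices lie in a single arc cut off by the previous edge, and consecutive edges are adjacent in the rotation, so no edge of the graph on that side separates them.

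Next we run the counting argument that guarantees a long zigzag path. At each vertex $v$ (degree $d$ in $Q_d$), order its incident edges by angle, i.e.\ by the circular order of the other endpoints. Split them into consecutive pairs $\{e_1,e_2\},\{e_3,e_4\},\dots$, so that two paired edges are rotationally adjacent at $v$. This yields $\lfloor d/2\rfloor$ pairs per vertex, and it is exactly where parity enters. Each edge then belongs to at most one pair at each endpoint. The edges together with these pairings form a union of walks in a line graph: each edge has at most two ``continuations'', one at each endpoint, and the walks are the zigzag paths.

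We then choose the left/right alternation consistently, using the two possible offsets of the pairing at each vertex. The aim is that every maximal walk is a genuine zigzag path in the above sense, so it is plane by the first step. Finally we show some walk has at least the claimed number of edges. Each walk has exactly two ends, each end is an unpaired edge-end, and there are at most $n\cdot(d-2\lfloor d/2\rfloor)$ such ends. Closed walks are cycles, which are even better, since deleting one edge leaves a long plane path. Comparing the $dn/2$ edges with the number of walks then gives an average walk length of order $d$. Refining this through the choice of pairing offsets, as in Perles' proof, should push the bound to $d$ when $d$ is odd and $d-1$ when $d$ is even.

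The main obstacle is this last step: making the pairing choice and the accounting tight enough to reach exactly $d$ (respectively $d-1$) rather than about $d/2$. The difficulty is that a plain average over walks loses a factor. I will first try Perles' original trick. For each vertex, delete its ``extreme'' edges (the first and last in angular order). The aim is to show that the remaining edges at every vertex can be continued on both sides, so that from any starting edge one can extend a zigzag path for $d$ (or $d-1$) steps before getting stuck. The step count is tracked by the rank of the current edge in the rotation at the current vertex, which changes by one each step. If this fails, the fallback is to cite Perles' theorem directly, in the form: a convex-geometric graph on $n$ vertices with more than $(k-1)n/2$ edges contains a plane path of length $k$. We then check its parity-sensitive version against $|E(Q_d)|=d2^{d-1}$.
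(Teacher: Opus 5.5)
Your self-contained route has a genuine gap at the decisive step: the walks you build need not be plane. A walk $x,u,v,y$ from your pairings is automatically plane when $uv$ is paired with its neighbour on the \emph{same} side (both left or both right) at $u$ and at $v$. When the sides differ, $vy$ can end in the arc between $u$ and $x$ and cross $ux$. Write $o_w\in\{0,1\}$ for the offset at $w$ and $r_w(e)$ for the rank of $e$ at $w$. The same-side requirement then reads $o_u+o_v\equiv r_u(e)+r_v(e)\pmod 2$ for every edge paired at both ends. This system is solvable only if $\sum\bigl(r_u(e)+r_v(e)\bigr)$ is even around every cycle of such edges, and nothing in a convex drawing forces that. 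So the alternation cannot, in general, be chosen consistently.

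Closed walks are not ``even better'' either. They need not be cycles or plane, and they can be as short as a $4$-cycle. For even $d$ the full pairing leaves no ends at all, so your end count bounds nothing. For odd $d$ the averaging would give exactly $d$ if all walks were open zigzags, so the loss is in planarity, not in the averaging. Your second idea, deleting \emph{both} extreme edges per round, costs up to $2n$ edges per round. Only about $d/4$ rounds are then affordable, and you reach length about $d/2$.

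The missing idea, which is how the paper proves its Perles-type lemma, is to delete \emph{one} side per round and alternate. In round $i$, remove at every vertex only its leftmost remaining edge ($i$ odd) or only its rightmost one ($i$ even), and let $\Gamma_i$ be what remains. Each round removes at most $n$ edges. With $k=\lfloor (d-1)/2\rfloor$ we have $|E(Q_d)|=dn/2>kn$, so some edge $uv$ lies in $\Gamma_k$.

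Since $uv$ survived round $k$, the extreme edges $uu'$, $vv'$ of $\Gamma_{k-1}$ on the side deleted in that round exist. They point into the two opposite arcs cut off by $uv$. Since these edges survived round $k-1$, $u'$ and $v'$ have extreme edges of $\Gamma_{k-2}$ on the other side, and so on down to $\Gamma_0$. Each new vertex lies in the arc between the previous two on its side, so the path spirals inward on both sides of $uv$ and is plane. It has $2k+1$ edges: $d$ for odd $d$ and $d-1$ for even $d$. The deletion order enforces exactly the alternation you tried to get from a global choice of offsets.

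Simply citing Perles' theorem is logically fine. In the form you quote it even gives length $d$ for every $d$, so there is no parity to check. But then the argument is only the citation. The parity loss in the statement comes from the weaker form $|E|>kn\Rightarrow$ plane path of length $2k+1$.
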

In Theorem~\ref{thm:drawingswithnolongpaths} below, we show that Theorem~\ref{cor_longplanepaths} is tight up to a multiplicative constant. Theorem~\ref{thm:drawingswithnolongpaths} follows from the following more general result (up to additive constant $1$).

\begin{restatable}{theorem}{planesubgraphrotateddrawing}
\label{thm-planesubgraphrotateddrawing}
For every $d \ge 3$, there is a convex-geometric drawing of $Q_d$ which does not contain a plane subgraph with more than $2d-2$ edges.
\end{restatable}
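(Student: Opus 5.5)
The plan is to build the drawing recursively, so that within each coordinate direction the edges are ``twisted'' and almost all pairs of them cross. Then a plane subgraph can use only about two edges per direction.

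\textbf{Construction.} Fix $d$. Write $D_{k}$ for the cyclic order in which the vertices of $Q_k$ sit on a circle. Take $D_1$ to be the two vertices $0,1$. Given $D_{k-1}$, I form $D_k$ from two copies of $D_{k-1}$, namely the vertices ending in $0$ and the vertices ending in $1$. I place them on two complementary arcs so that $x0$ and $x1$ occupy ``the same position'' in their arcs: if $D_{k-1}=(v_1,\dots,v_m)$, then $D_k=(v_10,\dots,v_m0,v_11,\dots,v_m1)$. The last step, turning the copy or reflecting it, is the knob I expect to tune. With this choice the new direction-$k$ edges $v_j0\,v_j1$ pairwise cross, since $v_j0,v_l0,v_j1,v_l1$ appear in this cyclic order. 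So a plane subgraph contains at most one edge of the last direction.

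This alone gives only the recursion $f(k)\le 2f(k-1)+1$, because the two copies sit on disjoint arcs and do not interact. So the copies must be interleaved rather than concatenated. My first attempt is the ``rotated'' placement: put $v_j0$ at angle $2\pi j/m$ and $v_j1$ at angle $2\pi j/m+\varepsilon_k$, and iterate with rapidly decreasing $\varepsilon_k$. An edge $ab$ of one copy and its twin $a'b'$ in the other copy then appear as $a\,a'\,b\,b'$ and cross. More generally, an edge of copy $0$ and an edge of copy $1$ cross unless they nest or are separated in a controlled way. The right choice of the rotation angles, possibly alternating the reflection at each level, is the part I expect to need adjusting.

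\textbf{Counting.} The key invariant I aim to prove, by induction on $k$, is: in $D_k$, for each direction $i$, any two vertex-disjoint direction-$i$ edges cross, except for pairs that are ``parallel neighbours'' coming from the rotation; and a plane set of direction-$i$ edges has size at most $2$, with size $1$ for the outermost direction. The next step is a charging argument. A plane subgraph $H$ of a convex-geometric drawing is an outerplanar graph, and its edges are pairwise non-crossing chords. Order the directions by chord length in the drawing. Two chords from the same direction class cannot both be present unless they are the exceptional parallel pair. Two such parallel chords, together with the boundary, enclose a region containing no other chord of that direction, and I charge the second edge to that region. Summing over $d$ directions gives at most $2d$. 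The saving of $2$ should come from the first and last levels of the recursion, where the twist is total, giving at most $1$ edge each: $2(d-2)+1+1=2d-2$.

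\textbf{Expected obstacle.} The hard step is choosing the rotations so that \emph{short} edges are also forced to cross. For example, the naive regular-polygon labelling (vertex $x$ at angle $\sum_i x_i\pi/2^{i-1}$) fails, because many consecutive boundary edges are cube edges and form a long plane path. So I will first check small cases ($d=3,4$) by hand or computer to fix the exact twist, and verify that no boundary-adjacent pair of points is a cube edge beyond the allowed ones. Only then will I formalize the invariant and run the induction.
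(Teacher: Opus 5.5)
Your proposal has a genuine gap in both the construction and the counting. The ``knob'' you leave to be tuned is exactly where the content of the proof lies.

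\emph{Construction.} The rotated placement with tiny $\varepsilon_k$ fails. The new direction-$k$ edges $v_j0\,v_j1$ join consecutive points of the circle, so all $2^{k-1}$ of them form a plane matching; at the top level that is already more than $2d-2$ edges once $d\ge 4$. Iterating with rapidly decreasing $\varepsilon_k$ produces the binary-counting cyclic order. Combinatorially, this is the same drawing as the regular-polygon labelling $\sum_i x_i\pi/2^{i-1}$ that you yourself reject.

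The missing idea is how the two copies are matched. Let $v_1,\dots,v_{2^{d-1}}$ be the vertices of $\mathcal{H}_{d-1}$ in clockwise order. Keep your tiny rotation for the old edges: place $u_j$ just after $v_j$, with $u_ju_l$ an edge iff $v_jv_l$ is one. You are right that such twins cross. But join $v_j$ to $u_{j+2^{d-2}}$ (indices mod $2^{d-1}$) instead of to $u_j$. This is a valid copy of $Q_d$ because the half-turn $v_j\mapsto v_{j+2^{d-2}}$ is an automorphism of $\mathcal{H}_{d-1}$. It is the paper's second description $\mathcal{H}'_d$, equivalent to rotating the second copy by slightly \emph{less than half a turn}, so that $\overline{0\cdots0}$ lands at $\frac12-\frac1{2^d}$. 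The new edges then have length $2^{d-1}-1$. Two of them are non-crossing only if they form a parallel pair $v_ju_{j+2^{d-2}}$, $v_{j+2^{d-2}}u_j$.

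\emph{Counting.} No per-direction count can reach $2d-2$, for any convex-geometric drawing of $Q_d$. Suppose all direction-$i$ edges pairwise cross. Then each of them has exactly $2^{d-1}-1$ points on either side, so direction $i$ is the antipodal matching. The direction classes are disjoint, so this happens for at most one $i$. Every other direction contains two disjoint non-crossing edges. Hence the sum of the per-direction maxima is at least $1+2(d-1)=2d-1$, and your saving of $1+1$ from two ``totally twisted'' levels is impossible. For instance, in $\mathcal{H}_3$ every direction admits two non-crossing edges, while the true maximum is $4$.

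The bound must come from interactions between directions, and the paper gets them by contraction:
\begin{itemize}
\item A plane $G$ contains at most two longest edges, namely one parallel pair.
\item Every other edge of $G$ joins two of the consecutive pairs $\{v_j,u_j\}$. Between two such pairs there are either no such edges or exactly the two crossing twins $v_jv_l$, $u_ju_l$, so $G$ uses at most one of them.
\item Contracting each pair $\{v_j,u_j\}$ to a point preserves the cyclic order of the pairs. It turns these edges into a plane subgraph of $\mathcal{H}_{d-1}$.
\end{itemize}
Induction from the directly checked base case $d=3$ then gives $2(d-1)-2+2=2d-2$.
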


\begin{restatable}{theorem}{drawingswithnolongpaths}
\label{thm:drawingswithnolongpaths}
For every $d \ge 4$, there is a convex-geometric drawing of $Q_d$ which does not contain a plane path longer than $2d-3$.
\end{restatable}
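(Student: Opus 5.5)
Theorem~\ref{thm:drawingswithnolongpaths} will follow from Theorem~\ref{thm-planesubgraphrotateddrawing}, applied to the same convex-geometric drawing $D$ of $Q_d$ constructed there. That drawing has no plane subgraph with more than $2d-2$ edges. A plane path is a plane subgraph, so every plane path in $D$ has at most $2d-2$ edges. It therefore suffices to rule out plane paths with exactly $2d-2$ edges, i.e., with $2d-1$ vertices.

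Suppose, for contradiction, that $P$ is a plane path in $D$ with exactly $2d-2$ edges. I would first recall the standard extremal fact for convex-geometric drawings. A plane graph on $k$ points in convex position has at most $2k-3$ edges. A maximal one is a triangulation of the convex polygon, and it consists of the $k$ hull edges together with $k-3$ non-crossing diagonals.

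I would then study the structure of the drawing from Theorem~\ref{thm-planesubgraphrotateddrawing}, presumably a ``rotated'' placement in which the edges in each direction $i$ form a family of nearly parallel chords. The proof of that theorem presumably shows the bound $2d-2$ by arguing that a plane subgraph uses few edges per direction, say at most two per direction, with extra restrictions. I would revisit that argument to characterize the extremal plane subgraphs achieving $2d-2$ edges. The aim is to show that any such subgraph either contains a vertex of degree at least $3$, or contains a cycle (for instance, it must use both of the two possible chords in some direction, and together they close a $4$-cycle or a similar structure).

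Either conclusion contradicts $P$ being a path. Hence every plane path in $D$ has at most $2d-3$ edges, which is the claim for $d\ge 4$. The hypothesis $d\ge 4$ presumably handles small cases where this extremal characterization degenerates.

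The main obstacle is the characterization step: showing that no plane path attains the maximum $2d-2$. This requires knowing the tight configurations in the proof of Theorem~\ref{thm-planesubgraphrotateddrawing}. My first attempt would be a counting or degree argument: if all $2d-1$ vertices of $P$ have degree at most $2$, then some direction class must contribute edges whose endpoints pair up into a crossing pair. If that fails, I would modify the construction slightly, for example by perturbing the rotation, to destroy the extremal paths.
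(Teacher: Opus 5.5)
Reducing to Theorem~\ref{thm-planesubgraphrotateddrawing} is the right starting point; the paper also reuses that argument for the drawing $\mathcal{H}_d$. However, the step that carries all the content, excluding plane paths with exactly $2d-2$ edges, is left open, and the characterization you aim for is false. Proposition~\ref{prop:existencematching} gives the plane subgraph $M_v\cup M_w\cup\{v_0v_0',w_0w_0'\}$ of $\mathcal{H}_d$ with $2d-2$ edges. It is a plane matching plus two vertex-disjoint edges, so it has maximum degree $2$, and it is acyclic. So an extremal plane subgraph need not contain a vertex of degree $3$ or a cycle, and no degree or cycle argument can exclude a path. Your intuition that the two top-direction chords ``close a $4$-cycle'' points the wrong way: what separates a path from such a linear forest is \emph{connectivity}.

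The missing idea is this. In the proof of Theorem~\ref{thm-planesubgraphrotateddrawing}, the edges of a plane subgraph of length less than $2^{d-1}-1$ contract to a plane subgraph of $\mathcal{H}_{d-1}$, so there are at most $2d-4$ of them. Any further edges have length $2^{d-1}-1$, and two such edges that do not cross form a parallel pair $A=v_iu_{i+2^{d-2}}$, $B=u_iv_{i+2^{d-2}}$.

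These two nested chords can never lie in a common connected plane subgraph. Let $L$ be the set of vertices strictly between $u_i$ and $v_{i+2^{d-2}}$, and $R$ the set of vertices strictly between $u_{i+2^{d-2}}$ and $v_i$. Any edge from $L\cup\{u_i,v_{i+2^{d-2}}\}$ to $R\cup\{v_i,u_{i+2^{d-2}}\}$ crosses $A$ or $B$, unless it joins an endpoint of $B$ to an endpoint of $A$. Those four pairs have length $1$ or $2^{d-1}$, and neither length occurs in $\mathcal{H}_d$, whose edge lengths all lie between $2^{d-2}\ge 2$ and $2^{d-1}-1$.

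Hence a plane path contains at most one edge of length $2^{d-1}-1$, and the bound becomes $(2d-4)+1=2d-3$. This is exactly the paper's proof. It also explains the hypothesis $d\ge 4$: the bound for $\mathcal{H}_{d-1}$ from Theorem~\ref{thm-planesubgraphrotateddrawing} is needed with $d-1\ge 3$. Your fallback of perturbing the construction is not an argument, and it is unnecessary.
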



\begin{restatable}{theorem}{drawingswithnolargematchings}
\label{thm:drawingswithnolargematchings}
For every $d \ge 4$, there is a convex-geometric drawing of $Q_d$ which does not contain a plane matching on more than $2d-4$ edges.
\end{restatable}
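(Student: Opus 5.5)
The plan is to reuse the convex-geometric drawing $D$ of $Q_d$ built for Theorem~\ref{thm-planesubgraphrotateddrawing}, in which every plane subgraph has at most $2d-2$ edges. Since a plane matching is itself a plane subgraph, $D$ already forbids plane matchings with more than $2d-2$ edges, so two edges have to be saved. The approach is augmentation: starting from any plane matching $M$ in $D$ with $|M|\ge 2$, exhibit two further edges of $Q_d$ such that $M$ together with them is still plane in $D$. Then $|M|+2\le 2d-2$, which gives $|M|\le 2d-4$. Matchings with $|M|\le 1$ are trivially fine since $d\ge 4$.

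First, I would describe the structure of a plane matching of chords on a convex point set. The edges of $M$ cut the disk into regions, and the dual graph of these regions (regions adjacent across an edge of $M$) is a tree with $|M|+1$ nodes. When $|M|\ge 2$ this tree has at least two leaves. A leaf region is bounded by a single matching edge $ab$ together with an arc of the boundary circle, and that arc contains only vertices not covered by $M$.

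The natural candidates for the two extra edges are edges of $Q_d$ drawn inside two distinct leaf regions, since such an edge cannot cross any chord of $M$. Taking one edge in each of two distinct regions also means the two added edges cannot cross each other. Inside a leaf region cut off by $ab$, I would look for one of the following:
\begin{itemize}
\item an edge from $a$ or $b$ to a vertex on its own arc;
\item an edge between two vertices on that arc;
\item if the arc is empty, so that $a$ and $b$ are consecutive on the hull, an edge leaving $a$ or $b$ into the adjacent region that still avoids $M$.
\end{itemize}

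The main obstacle is guaranteeing that such an edge exists in every leaf region. This is exactly where the specific vertex order of $D$ from Theorem~\ref{thm-planesubgraphrotateddrawing} must be used. I would try to prove a local property of that order: for every pair of vertices $a,b$ that are $Q_d$-adjacent, each of the two boundary arcs determined by $ab$ either contains a $Q_d$-neighbour of $a$ or of $b$, or contains two mutually adjacent vertices. The idea is that in the "rotated" ordering, consecutive blocks of the order correspond to subcubes, so short arcs carry cube edges.

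If this local property fails for very short arcs (for example, when the arc is empty), I would fall back on the charging argument used in the proof of Theorem~\ref{thm-planesubgraphrotateddrawing}. I would redo that counting with the extra information that the subgraph is a matching, and look for the two boundary terms in that count that a matching cannot realize. Those two unrealized terms would account for the improvement from $2d-2$ to $2d-4$.
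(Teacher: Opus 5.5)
Your reduction is logically sound: if every plane matching $M$ of $\mathcal{H}_d$ could be extended by two edges to a plane subgraph, Theorem~\ref{thm-planesubgraphrotateddrawing} would give $|M|\le 2d-4$. However, the step that carries all the weight is false as you formulate it. Every edge of $\mathcal{H}_d$ has length at least $2^{d-1}-2^{d-2}=2^{d-2}$. Suppose a leaf region is cut off by an edge $ab$ of length exactly $2^{d-2}$ and lies on its short side. Then the closed arc $[a,b]$ has only $2^{d-2}+1$ vertices, so any other edge with both endpoints on it would have length less than $2^{d-2}$. That leaf region therefore contains no edge of $\mathcal{H}_d$ besides $ab$, and your local property fails for it. (These are in fact the only bad leaves, since every vertex has one edge of length $2^{d-2}$ in each direction.)

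This is not a degenerate corner case. The extremal matching $M_v\cup M_w$ of Proposition~\ref{prop:existencematching} consists of two nested families of chords. Its dual tree is a path, and both of its leaves are of exactly this type. The two edges that extend it to a plane subgraph with $2d-2$ edges, namely the parallel pair at $v_0$ and $w_0$, lie in the central region, which is not a leaf. So ``one edge in each of two leaf regions'' cannot be the mechanism. Making augmentation work would require locating addable edges in internal regions, which is a genuine structural analysis of plane matchings in $\mathcal{H}_d$ that the proposal does not provide.

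The fallback does not close the gap either. In the induction proving Theorem~\ref{thm-planesubgraphrotateddrawing}, both contributions can be realized by a matching. First, the two edges of a parallel pair are vertex-disjoint. Second, contracting the pairs $(u_i,v_i)$ turns a matching into a plane subgraph of $\mathcal{H}_{d-1}$ of maximum degree $2$. The bound $2(d-1)-2$ is attained within that class, so the matching hypothesis, in the form that survives contraction, gives no improvement. The rerun yields $|M|\le 2d-4$ only when $M$ has no edge of length $2^{d-1}-1$. Otherwise the saving of two edges must come from the interaction between the parallel pair and the contracted part, which needs a strengthened inductive statement that you do not formulate.

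The missing idea is the direct count by length classes that the paper uses. In a plane matching on $2^d$ points in convex position, edges of equal length $\ell$ have pairwise disjoint closed short arcs of $\ell+1$ points. Hence there are at most $\lfloor 2^d/(\ell+1)\rfloor$ of them: at most $3$ of length $2^{d-2}$ and at most $2$ of every other length. Then split into two cases.
\begin{itemize}
\item If three edges of length $2^{d-2}$ are present, edges of length $2^{d-1}-1$ and $2^{d-1}-2$ are excluded and every remaining length occurs at most once. This gives at most $d-1$ edges.
\item Otherwise, lengths $2^{d-1}-1$ and $2^{d-1}-2$ together contribute at most $2$ edges. This gives at most $2+2(d-4)+2=2d-4$.
\end{itemize}
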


Note that Theorem~\ref{thm:drawingswithnolargematchings} is again tight up to a multiplicative constant, since by Theorem~\ref{cor_longplanepaths}, we know that every convex-geometric drawing of $Q_d$ contains a plane path of length $d$ or $d-1$ and, therefore, a plane matching on $\lfloor\frac{d}{2}\rfloor$ or $\lfloor \frac{d-1}{2} \rfloor$ edges, respectively. 

As our last result regarding plane subgraphs of convex-geometric drawings of $Q_d$, we provide a necessary condition for a graph $G$ to be a plane subgraph of every rectilinear drawing of $Q_d$ (for sufficiently large $d$). 

\begin{restatable}{theorem}{graphsembeddableinQd}
\label{thm:graphsembeddableinQd}
If there is a $d \ge 2$ such that every rectilinear drawing of $Q_d$ contains a fixed graph $G$ as a plane subgraph, then $G$ is a forest of caterpillars.
\end{restatable}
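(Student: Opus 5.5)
The plan is to exhibit one specific rectilinear (indeed convex-geometric) drawing of $Q_d$ whose plane subgraphs are all forests of caterpillars. Since $G$ must appear as a plane subgraph of \emph{every} rectilinear drawing, it then appears in this one, so $G$ is a forest of caterpillars.

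\textbf{The drawing.} The construction uses the bipartition of $Q_d$: every edge joins a vertex with an even number of ones in its label to a vertex with an odd number. Place all even vertices, in any order, at distinct points of the upper half of a circle. Place all odd vertices at distinct points of the lower half. Draw every edge as a straight segment; each segment is a chord joining the upper arc to the lower arc.

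\textbf{Reduction to two-layer drawings.} Two such chords $u_1v_1$ and $u_2v_2$, with $u_i$ on the upper arc and $v_i$ on the lower arc, cross exactly when their endpoints interleave around the circle. Order the upper arc from left to right and the lower arc from left to right. Then the chords cross if and only if $u_1<u_2$ and $v_1>v_2$, or the reverse.

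This is precisely the crossing rule for a two-layer drawing, where the vertices lie on two parallel lines and edges are straight segments between the lines. So a plane subgraph $H$ of our drawing is a bipartite graph with a crossing-free two-layer drawing. It remains to show that such a graph is a forest of caterpillars. This is a classical fact (Harary--Schwenk; Eades et al.), but I would include a short proof in two steps.

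\emph{No cycles.} Suppose $H$ contains a cycle $C$. Let $u$ be the leftmost top vertex of $C$, with cycle neighbours $v<v'$ on the bottom line. Follow the rest of $C$ from $v'$ back to $v$. It must reach some top vertex $w>u$ adjacent to a bottom vertex $y<v'$. Either $y<v$, or $y$ lies between $v$ and $v'$.
\begin{itemize}
\item If $y<v$, then the edges $uv$ and $wy$ cross, since $u<w$ and $v>y$.
\item If $y$ lies between $v$ and $v'$, then $uv'$ and $wy$ cross.
\end{itemize}
To make this rigorous, I would instead take an innermost pair of adjacent edges $uv,uv'$ and argue that the path closing $C$ must leave the wedge between them. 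I expect this formalisation to be the main obstacle, though it is routine.

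\emph{No spider.} A tree is a caterpillar if and only if it contains no copy of the spider $S$, which is $K_{1,3}$ with every edge subdivided once. Suppose $S$, with centre $c$ on one line and neighbours $a<b<e$ on the other, is drawn plane in two layers. The leg through $b$ continues to a vertex $x\neq c$ on $c$'s line.
\begin{itemize}
\item If $x<c$, then $bx$ crosses $ca$, because $a<b$ and $c>x$.
\item If $x>c$, then $bx$ crosses $ce$, by the symmetric argument.
\end{itemize}
Either way we get a contradiction. Hence every component of $H$ is an $S$-free tree, that is, a caterpillar, and $H$ is a forest of caterpillars. Applying this to $H=G$ finishes the proof, for every $d\ge 2$.
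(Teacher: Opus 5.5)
\paragraph{Assessment.} Your proof is correct, and for the caterpillar part it takes a genuinely different and shorter route than the paper.

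\paragraph{Comparison with the paper.} The paper uses two drawings. The bipartition-separated circle drawing (the same as yours) serves only to show that $G$ is acyclic. For the spider, the paper then uses its main construction $\mathcal{H}_d$: a three-case counting argument, based on every edge having length at least $2^{d-1}-2^{d-2}$, excludes a plane copy of the subdivided claw $G_0$. You instead notice that the bipartition drawing is combinatorially a two-layer drawing. A single drawing then handles both acyclicity and the spider, via the classical fact that plane two-layer drawings only admit caterpillar forests. Your spider argument (the middle neighbour $b$ of the centre can have no second neighbour) does not depend on $d$ or on any length structure.

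What the paper's route buys is the extra information that its extremal drawing $\mathcal{H}_d$ itself contains no plane $G_0$. However, $\mathcal{H}_d$ does contain plane $4$-cycles, e.g.\ the copy of $\mathcal{H}_2$ on the vertices $\overline{0\cdots0x}$. So $\mathcal{H}_d$ cannot do the job alone. Your argument shows that one convex-geometric drawing of $Q_d$ already has only caterpillar forests as plane subgraphs.

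\paragraph{The cycle step.} The formalisation you flag as the main obstacle is not needed. Let $w$ be the other cycle-neighbour of $v$. Since $w\neq u$ and $u$ is the leftmost top vertex, $w>u$; together with $v<v'$, the edge $wv$ crosses $uv'$ immediately. This also covers the case $y=v$, which your dichotomy ``$y<v$ or $y$ between $v$ and $v'$'' leaves out.
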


We then shift our focus away from plane subgraphs of drawings of the $d$-cube and use a slight modification of our construction to revisit the problem of computing the \emph{maximum rectilinear crossing number} of a $d$-cube. 
The maximum rectilinear crossing number of a graph $G$ (denoted by CR$_{\max}(G)$) is the maximum number of crossings, taken over all rectilinear drawings of $G$. This parameter was studied for many graph classes, such as cycles~\cite{cyclesrectcross}, complete graphs~\cite{completerectcrossing} and $k$-regular graphs~\cite{kregrectcross}, among others. Alpert et al.~\cite{cuberectcrossing}~investigated this question for $Q_d$. They proved that CR$_{\max}(Q_d) \ge 2^{d-2}(2^{d-1}(d^2-2d+3)-d^2-1)$ and conjectured that this is tight. We give a significantly shorter proof of the following more general result 
(terms ``length-regular'' and ``length profile'' are defined in Subsection~\ref{subs:prelim}).

\begin{restatable}{theorem}{lengthregcrossings}
\label{thm:lengthregcrossings}
Let $d\ge 2$, and let $G$ be a $d$-regular graph. Then every length-regular drawing of $G$ with length profile $(\ell_1, \ell_2, \ldots, \ell_d)$ has exactly  $\frac{|V(G)|}{2} \sum_{i=1}^d (\ell_i - 1) \left(i - \frac{1}{2}\right)$ crossings.
\end{restatable}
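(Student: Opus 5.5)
The plan is to count crossings in the convex-geometric setting edge by edge, charging each crossing to one of the two edges involved. I read the definitions of Subsection~\ref{subs:prelim} as follows: the vertices lie on a regular $|V(G)|$-gon, the \emph{length} of an edge is the cyclic distance (at most $|V(G)|/2$) between its endpoints along the polygon, and every vertex is incident to exactly one edge of each length in the profile. I take the profile to be listed so that the factor $i-\frac12$ corresponds to "number of strictly longer lengths plus one half", i.e.\ $\ell_1>\ell_2>\dots>\ell_d$. If the paper orders the lengths increasingly, the same argument works with the summation index reversed. With this reading there are exactly $|V(G)|/2$ edges of each length $\ell_i$, since each of the $|V(G)|$ vertices is an endpoint of exactly one such edge.

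The basic fact is that two chords with four distinct endpoints cross iff their endpoints interleave. Fix an edge $e$ of length $\ell_i$. Its shorter arc contains exactly $\ell_i-1$ vertices strictly inside. An edge $f$ crosses $e$ iff $f$ has exactly one endpoint strictly inside this arc and its other endpoint strictly outside the closed arc.

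The key step is the following observation. Let $w$ be an interior vertex of the arc of $e$, and let $f$ be an edge at $w$ of length $\ell_j\ge \ell_i$. I claim $f$ necessarily crosses $e$. Its other endpoint is at cyclic distance $\ell_j$ from $w$, and from $w$ the two endpoints of $e$ are at distances less than $\ell_i$ going through the arc. Going the other way round, those endpoints are at distance more than $|V(G)|-\ell_i\ge |V(G)|/2\ge \ell_j$. So the far endpoint of $f$ lands strictly outside the closed arc of $e$. The boundary case $\ell_j=|V(G)|/2$ needs a one-line separate check.

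Now charge every crossing to the edge that is not longer. If the lengths differ, the longer edge $f$ has exactly one endpoint inside the arc of the shorter one $e$, so the crossing is counted exactly once from $e$'s side. For $e$ of length $\ell_i$, these charged crossings number exactly $(\ell_i-1)\cdot(i-1)$: each of the $\ell_i-1$ interior vertices contributes one edge of each strictly longer length, and there are $i-1$ such lengths. If the two crossing edges $e,f$ have equal length $\ell_i$, then each has exactly one endpoint inside the other's arc. So counting, for each $e$, the $\ell_i-1$ edges of length $\ell_i$ emanating from its interior vertices counts each such crossing exactly twice, which gives a contribution of $\frac12(\ell_i-1)$ per edge. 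By the observation above all these edges do cross $e$, and edges shorter than $e$ are never charged to $e$.

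Summing, each edge of length $\ell_i$ carries $(\ell_i-1)(i-\frac12)$ crossings. Multiplying by the $|V(G)|/2$ edges of each length gives the stated formula. The main obstacle I expect is the careful verification of the observation, namely that an edge starting inside an arc and at least as long as the arc's chord always exits strictly outside it and never ends at an endpoint of $e$. This is where the definition of length as the minimal cyclic distance is used, together with the boundary case of diameters when $|V(G)|$ is even. I would handle it first by writing positions modulo $|V(G)|$ explicitly.
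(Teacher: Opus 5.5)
Your argument is the paper's own. For an edge $e$ of length $\ell_i$ you count the incident edges of length $\ge\ell_i$ at the $\ell_i-1$ vertices strictly inside its short arc, note that these are exactly the edges at least as long as $e$ that cross it, and halve the equal-length contribution to get the factor $i-\frac12$. Your explicit check that such an edge always ends strictly outside the closed arc (via $\ell_i+\ell_j\le|V(G)|$) is more careful than the paper, which only asserts it. Placing the vertices on a regular polygon is harmless, since everything depends only on the cyclic order.

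\textbf{The gap is ties.} The paper's length profile is only weakly decreasing, $\ell_1\ge\dots\ge\ell_d$, and ties really occur: in $\mathcal{H}_d$ one has $\ell_{d-1}=\ell_d=2^{d-2}$. For such profiles three of your claims are false: that each vertex has exactly one edge of each length, that there are $|V(G)|/2$ edges of each length, and that $e$ is crossed by exactly $\ell_i-1$ edges of its own length. As written, your proof therefore covers only strictly decreasing profiles.

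The fix is to count with multiplicity. A value $\ell$ can appear at most twice in a profile, since only the chords to $v\pm\ell$ have length $\ell$. So suppose $\ell=\ell_i=\ell_{i+1}$, with $i$ the first index of value $\ell$. Then there are $|V(G)|$ edges of length $\ell$. Every interior vertex of the short arc of such an edge $e$ carries $i-1$ strictly longer edges and $2$ edges of length $\ell$, and your observation shows that all of them cross $e$. Hence these edges contribute
\[
|V(G)|(\ell-1)(i-1)+\tfrac12\,|V(G)|\cdot 2(\ell-1)=|V(G)|(\ell-1)\,i=\tfrac{|V(G)|}{2}(\ell-1)\bigl[(i-\tfrac12)+(i+\tfrac12)\bigr],
\]
which is exactly the sum of the $i$-th and $(i+1)$-st terms of the formula.

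The paper handles this case differently: it relabels half of the length-$\ell$ edges as having length $\ell_{i+1}$, with one of each at every vertex. Counting with multiplicity avoids that relabelling. Taken literally, the relabelling requires the length-$\ell$ edges to split into two perfect matchings, which can fail for non-bipartite $G$ (for example, two disjoint triangles of length-$2$ chords on six points).
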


Since the drawings that we construct are length-regular, Theorem~\ref{thm:lengthregcrossings} can be applied on them.  It gives the lower bound on CR$_{\max}(Q_d)$ mentioned above by a simple calculation. 
Interestingly, even though our construction is different from the one presented by Alpert et al., in the resulting drawings, the vertices lie in the same cyclic order on the convex hull and they are \emph{weakly isomorphic}, i.e., the same pairs of edges cross.

Lastly, we consider drawings of $Q_d$ that are not necessarily convex-geometric and prove the following. 

\begin{restatable}{proposition}{fourpathrectilinear}
\label{prop:4pathrectilinear}
Any rectilinear drawing of $Q_d$, for $d \ge 3$, contains a plane path of length at least $4$.
\end{restatable}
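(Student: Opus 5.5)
Since $Q_3$ is a subgraph of $Q_d$ for every $d\ge 3$, and the restriction of a rectilinear drawing of $Q_d$ to a copy of $Q_3$ is a rectilinear drawing of $Q_3$, it suffices to treat $d=3$. A plane path in the subdrawing is plane in the whole drawing. I assume the usual general position: no vertex lies in the interior of an edge, so two segments with a common endpoint never cross.

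Let $P=v_0v_1v_2v_3v_4$ be a path in a rectilinear drawing. Adjacent edges cannot cross, so the only possible crossings are the three pairs
\[
  (v_0v_1,\ v_2v_3),\qquad (v_1v_2,\ v_3v_4),\qquad (v_0v_1,\ v_3v_4).
\]
The goal is therefore to exhibit a path of length $4$ in $Q_3$ for which none of these three pairs crosses.

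\textbf{Step 1: a plane path of length $3$ in a face.} Decompose $Q_3$ into two opposite $4$-cycles $C=abcd$ and $C'=a'b'c'd'$, joined by the matching $aa',bb',cc',dd'$. A rectilinear $4$-cycle is drawn either without crossings or as a bowtie, where exactly one pair of opposite edges crosses. Deleting one edge of $C$, chosen to be one of the crossing edges if $C$ is a bowtie, gives a plane path of length $3$. Up to symmetry we have at least two, and often four, such paths, e.g.\ $b\,c\,d\,a$ when $ab$ is deleted.

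\textbf{Step 2: extending the path.} I would extend a path from Step 1 to length $4$ by appending a matching edge at one end, say $aa'$ or $bb'$ for the path $bcda$. This gives the candidates $a'\,a\,d\,c\,b$ and $a\,d\,c\,b\,b'$. For $a'adcb$ the only new crossing risks are $a'a$ against $dc$ and $a'a$ against $cb$. Doing the same at the other ends, and in the other rotations of $C$, gives a family of about eight candidate paths that use three edges of $C$ plus one matching edge.

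\textbf{Step 3: using the other face.} The same construction applied to $C'$ gives eight more candidates. Mixed paths such as $d\,a\,a'\,b'\,c'$, which use two edges of each face joined by a matching edge, add further candidates. In every mixed path the two end edges lie in different faces.

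\textbf{Main obstacle and how I would handle it.} The crux is to show that these candidates cannot all be blocked. A single crossing between a matching edge $xx'$ and a cycle edge $yz$ of $C$ blocks only those candidates containing both edges as a non-adjacent pair. So I would argue by contradiction: assume every candidate is blocked, and derive many crossings forced on a few segments.

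I would first try a topological argument. Take a vertex $a$ of the drawing that lies on the convex hull of all eight vertices. The three edges at $a$ leave $a$ into a wedge of angle less than $\pi$, and the angular order of $ab$, $ad$, $aa'$ around $a$ can be used to pick the extension direction. The aim is to show that if $aa'$ crosses $dc$, then $aa'$ cannot also cross $bc$, or else the path $a'\,a\,b\,c\,d$ works instead. This gives a finite case analysis on the orientations of triples of vertices, which I expect to be tedious but routine.

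If that does not close cleanly, the fallback is a computer-free exhaustive case analysis on the order type of the $8$ points, restricted to the relevant $5$-vertex configurations around a hull vertex.
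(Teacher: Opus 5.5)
\textbf{The proposal has a genuine gap: the step that would make it a proof is not there.}

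Your reduction to $Q_3$, the list of the three pairs that can cross in a $4$-edge path, and the plane/bowtie dichotomy for a drawn $4$-cycle are all correct. But they are the easy part. The claim that the candidates ``cannot all be blocked'' is the entire content of the proposition, and you neither prove it nor reduce it to an explicit finite check. A ``tedious but routine'' orientation analysis, or an exhaustive pass over order types, is a plan, not an argument.

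\textbf{The one concrete step you sketch does not work.} The path $a'\,a\,b\,c\,d$ contains $a'a$ and $cd$ as a non-adjacent pair. So it is blocked whenever $aa'$ crosses $dc$, and it can never be the fallback in that situation. Separately, the claim that $aa'$ cannot cross both $bc$ and $cd$ holds for a convex plane $C$, but it fails when $C$ is a bowtie or a non-convex quadrilateral.

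More generally, extending a plane $3$-path of $C$ at an endpoint $x$ by $xx'$ is blocked exactly when $xx'$ crosses one of the two edges of $C$ not incident to $x$. So your Step~2 family amounts to one condition per vertex of $C$, and all four conditions can fail at once. Take $C$ convex and let every $xx'$ pass through the interior of $C$ and leave through a side not incident to $x$. Any correct argument must therefore control how the primed vertices sit relative to each other and to $C$. That part is missing entirely.

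\textbf{How the paper supplies this control.} It splits into cases over all faces of $Q_3$.
\begin{itemize}
\item If some face $abcd$ is drawn plane, blocking the extensions at $x$ confines $x'$ to a ``plausible region'': inside the angle of the face at $x$, outside the face's hull. The paper then examines the edges $a'b'$ and $a'd'$ to obtain paths such as $b'\,a'\,a\,b\,c$ or $d'\,a'\,a\,d\,c$.
\item If no face is plane, the fact that $a'b'c'd'$ and side faces such as $c\,d\,d'\,c'$ are also bowties is what locates the primed vertices. The paths it uses there, e.g.\ $c'\,b'\,b\,a\,a'$ or $a\,d\,d'\,c'\,c$, contain two matching edges. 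They therefore lie outside your candidate family, since every path you list uses exactly one matching edge.
\end{itemize}
So it is not even clear that your family suffices for a fixed choice of $C$. To turn the outline into a proof you would need to carry out the case analysis. You would also need to exploit non-planarity of faces other than $C$, and probably enlarge the family or choose the decomposition adaptively.
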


In the case of simple drawings of $Q_d$, we prove that Proposition~\ref{prop:4pathrectilinear} does not admit a straightforward generalization. 

\begin{restatable}{proposition}{QthreeNoPlanePath}
\label{prop:Q3NoPlanePath}
There exists a drawing of $Q_3$ with no plane path of length $4$.
\end{restatable}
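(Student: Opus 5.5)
The statement only asks for existence, so the plan is to exhibit one explicit simple drawing $\mathcal{D}$ of $Q_3$ and then check it combinatorially.

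\textbf{What has to be forced.} A path of length $4$ has edges $e_1e_2e_3e_4$. Its only pairs of non-adjacent edges are $\{e_1,e_3\}$, $\{e_2,e_4\}$ and $\{e_1,e_4\}$; the last pair is in fact adjacent when the path closes up to a $4$-cycle. In a simple drawing adjacent edges never cross. So the requirement is: for every path $P$ of length $4$ in $Q_3$, at least one of its (at most three) independent edge pairs crosses in $\mathcal{D}$.

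\textbf{Counting.} $Q_3$ has $12$ edges, and each edge is independent of $12-1-4=7$ others. Hence there are $42$ independent pairs in total. We will need a large fraction of them to cross, and each may cross at most once.

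\textbf{Order of steps.}
\begin{enumerate}
\item Fix a candidate crossing set $C$ of independent edge pairs that is invariant under a large subgroup $H$ of $\mathrm{Aut}(Q_3)$. The natural choice is the cyclic group generated by a $4$-fold rotation of the coordinates combined with complementation. The $H$-invariance means the drawing can be built with the matching rotational symmetry.
\item Verify the combinatorial property for $C$: no path of length $4$ avoids $C$.
\item Realise $C$ by an actual simple drawing.
\end{enumerate}

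\textbf{Verifying the property (step 2).} Paths of length $4$ in $Q_3$ are determined by a starting vertex and a sequence of coordinate directions $(c_1,c_2,c_3,c_4)$ with $c_i\ne c_{i+1}$, subject to the path not revisiting a vertex. Up to $H$ only a handful of direction patterns remain: for example $(1,2,1,2)$ gives a $4$-cycle, while $(1,2,3,1)$ and $(1,2,1,3)$ give open paths. Each pattern needs checking only for a few starting vertices.

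To organise the check, build the auxiliary graph $\Gamma$ on $E(Q_3)$ whose edges are the independent pairs \emph{not} in $C$. A plane path of length $4$ is then exactly a path $P$ in $Q_3$ whose independent pairs all lie in $\Gamma$. So we choose $C$ so that $\Gamma$ is sparse, for instance with each edge $e$ non-crossing with only about two of its seven independent edges. Then the check becomes easy: extending any plane path $e_1e_2e_3$ is impossible because the allowed $e_4$ must avoid $C$ with both $e_1$ and $e_2$.

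\textbf{Realising the drawing (step 3).} We start from the standard drawing: two concentric $4$-cycles, the inner one being the face $x_3=0$ and the outer one $x_3=1$, joined by the perfect matching. We then reroute edges symmetrically, for instance drawing the inner cycle as a "twisted" $4$-cycle with its two diagonal-type crossings, and letting each matching edge wind halfway around the centre. This makes each matching edge cross the opposite cycle edges exactly once.

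We then read off the crossing set of the resulting drawing, iterating the rerouting until the condition of step 2 holds.

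\textbf{Main obstacle.} The hard part is step 3: simultaneously achieving enough crossings and preserving simplicity. Independent edges may cross at most once, and edges sharing a vertex may not cross at all. This interacts with the rotation systems at the vertices, since not every crossing set $C$ is realisable by a simple drawing.

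The approach we would try first is to fix the rotation at each vertex consistently with the symmetry. We would then draw one representative edge of each $H$-orbit explicitly as a curve, deriving all other edges by the symmetry. Simplicity then has to be checked only for pairs involving a representative edge.

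If the symmetric attempt fails, the fallback is a computer-style exhaustive search over small modifications of the twisted drawing. Step 2 is finite, so it can be carried out on any candidate produced this way.
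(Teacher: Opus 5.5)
Your proposal is a search strategy, not a proof. The statement is purely existential, so its entire content is an explicit drawing together with a verification, and you supply neither. The crossing set $C$ is never fixed, and the check that no path of length $4$ avoids $C$ is never carried out. The realisation is deferred to ``iterating the rerouting until the condition holds'' or to a computer search.

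The concrete ingredients you do name also fail as stated.
\begin{itemize}
\item If the ``twisted'' inner cycle ``with its two diagonal-type crossings'' means that both pairs of opposite edges cross, this is impossible in a simple drawing, because $C_4$ is not a thrackle. Suppose $p_1p_2$ crosses $p_3p_4$ at $X$ and $p_2p_3$ crosses $p_4p_1$ at $Y$. Then these four edges form a plane $K_{2,4}$ with parts $\{X,Y\}$ and $\{p_1,\dots,p_4\}$, so the rotations at $X$ and $Y$ must agree up to reversal. Yet $p_1,p_2$ are opposite in the rotation at $X$ and consecutive in the rotation at $Y$. Hence every face of the cube keeps an uncrossed opposite pair in any simple drawing.
\item Matching edges that wind around the centre inside the annulus between the two concentric cycles meet only each other, not the cycle edges. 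The crossings you claim need a different routing, which you do not specify.
\item A path of length $4$ has five distinct vertices, so $(1,2,1,2)$ is not a path and $e_1,e_4$ are always independent in $Q_3$. Requiring $4$-cycles to be non-plane is an unnecessary extra constraint.
\end{itemize}

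The missing idea is a structural choice that makes both the realisation and the check short; this is what the paper does. Call the two $4$-cycles $abcd$ (black) and $a'b'c'd'$ (blue), and draw the matching edges $aa',bb',cc',dd'$ (red) so that they pairwise cross. A path of length $4$ cannot stay inside one $4$-cycle, so it uses a red edge, and if it is plane it uses exactly one. Its other three edges therefore split between the two cycles as $3+0$, $2+1$, $1+2$ or $0+3$.

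In the paper's explicit drawing, every red edge also crosses both black edges not incident to it. This rules out any plane path with two black edges. A few specific red--blue and black--blue crossings (e.g.\ $bb'$ crosses $a'd'$) rule out the cases with two or three blue edges.

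To turn your plan into a proof, you must commit to such a crossing pattern, exhibit a simple drawing realising it, and carry out this four-case check.
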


Our paper is organized as follows. In the next section, we give some technical definitions that we use throughout the paper. In Section~\ref{sec:planesubgraphs}, we derive Theorem~\ref{cor_longplanepaths} and we present our first construction, which we use to prove Theorems~\ref{thm-planesubgraphrotateddrawing},~\ref{thm:drawingswithnolongpaths},~\ref{thm:drawingswithnolargematchings} and~\ref{thm:graphsembeddableinQd}. In Section~\ref{sec:maxcrossing}, we prove Theorem~\ref{thm:lengthregcrossings} and present our second construction which we use to give the above lower bound on CR$_{\max}(Q_d)$. In Section~\ref{sec:nonconvexdrawings}, we prove Propositions~\ref{prop:4pathrectilinear} and~\ref{prop:Q3NoPlanePath}. 
In the last section, we mention some open problems.



\subsection{Preliminaries}\label{subs:prelim}

Let $\mathcal{G}$ be a convex-geometric drawing of a graph $G$. The \emph{combinatorial length}, or \emph{length} for short, of an edge $e=uv$ in $\mathcal{G}$ is the graph distance between $u$ and $v$ in the cycle bounding the convex hull of the vertices of $\mathcal{G}$ (the edges of this cycle are not necessarily edges of the drawing); see Figure~\ref{fig:length}. Assume that $G=Q_d$ for some $d$. The possible lengths of the edges are $1,2,\dots, 2^{d-1}$. To a vertex $v\in \mathcal{G}$ we associate a tuple $\hat{v} = (\ell_1,\ell_2,\dots, \ell_d)$ where $\ell_1\ge \ell_2 \ge \cdots \ge \ell_d$ are the lengths of the edges adjacent to $v$. If $\hat{v}=\hat{u}$ for every two vertices $u,v \in \mathcal{G}$, we say that $\mathcal{G}$ is \emph{length-regular} and we call $\hat{v}$ the \emph{length profile} of $\mathcal{G}$. 

For a convex-geometric drawing $\mathcal{G}$ and two vertices $v,w\in \mathcal{G}$ we write $[v,w]$ for the set of vertices of $\mathcal{G}$ that are visited when traversing the convex hull of the vertices of $\mathcal{G}$ clockwise from $v$ to $w$; we call this an \emph{interval}. We write $(v,w)$ for $[v,w] \setminus \{v,w\}$. As mentioned before, the $d$-cube $Q_d$ is a graph whose vertices are binary strings of length $d$. We denote the vertices of $Q_d$ by $\overline{x}$, where $x$ is a string in $\{0,1\}^d$. Throughout the paper, we slightly abuse notation and identify vertices of a graph $G$ with their representation in a drawing of $G$. 

\begin{figure}[ht]
    \centering
    \begin{tikzpicture}[line cap=round,line join=round, scale=1.7]
\tikzset{
  vertex/.style={circle, fill, inner sep=1.2pt}
}

\foreach \i in {1,...,16} {
  \node[vertex] (v\i) at ({360/16*(\i-1)}:1cm) {};
   \fill (v\i) circle (1pt);
}
\node (c1) at ({360/16*(5-1)}:1.25cm) {};

\node  at ({360/16*(5-1)}:1.15cm) {$u$};
\node  at ({360/16*(12-1)}:1.15cm) {$v$};
\node  at (-.3,0) {$e$};

\draw[dotted]
\foreach \v [remember=\v as \prev (initially v16)] in
{v1,v2,v3,v4,v5,v6,v7,v8,v9,v10,v11,v12,v13,v14,v15,v16}
{
  (\prev) -- (\v)
};
\draw[thick, black, line width=0.8pt] (v5) -- (v12);
\draw (c1) arc[start angle=90, end angle=247.5, radius=1.25cm];
\end{tikzpicture}
    \caption{The length of $e$ is $7$: the number of convex hull edges within the depicted arc.}
    \label{fig:length}
\end{figure}
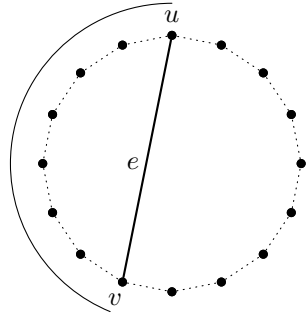

\section{Plane subgraphs in convex-geometric drawings of the \texorpdfstring{$\mathbf{d}$}{d}-cube}\label{sec:planesubgraphs}

We start by giving a lower bound on the length of a plane path that is present in any convex-geometric drawing of the $d$-cube.
The following result is a variation of a theorem by Perles, mentioned in~\cite[p. 292]{perles}, we prove it in the appendix.

\begin{restatable}[\restateref{lemma_longplanepaths}]{lemma}{longplanepaths}
\label{longplanepaths}
\label{lemma_longplanepaths}
Let $k \in \mathbb{N}$ and $G = (E, V)$ be a graph with $|V| = n$ and $|E| > kn$. Then, any convex-geometric drawing of $G$ contains a plane path of length at least $2k+1$.
\end{restatable}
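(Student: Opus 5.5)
The plan is to adapt the classical zigzag argument behind Perles' theorem. Fix a convex-geometric drawing of $G$ and assume, for contradiction, that it has no plane path of length $2k+1$.

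\emph{Zigzag paths.} Order the edges at each vertex $v$ by the angle they make with the convex hull: going clockwise from the hull side, let $e_1(v),\dots,e_{\deg v}(v)$ be the incident edges. A path $v_0v_1\dots v_m$ is a \emph{zigzag} if its turns alternate. That is, at $v_1$ the edge $v_1v_2$ is the neighbour of $v_0v_1$ in the rotation at $v_1$ on the clockwise side, at $v_2$ the neighbour on the counterclockwise side, and so on.

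\emph{First step: zigzag paths are plane.} Because the points are in convex position, each new edge $v_iv_{i+1}$ is the angularly next edge at $v_i$. Hence no other edge at $v_i$ separates $v_{i+1}$ from $v_{i-1}$. By induction, all earlier edges then lie in one closed region bounded by $v_{i-1}v_i$ and $v_iv_{i+1}$, and the remaining path stays in the complementary region. The alternation of turns is what keeps the path spiralling inward, so it never recrosses itself. I would verify this with intervals $[v,w]$: show by induction that $v_{i+1}\in(v_{i-1},v_i)$ or $(v_i,v_{i-1})$ alternately, and that the intervals are nested.

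\emph{Second step: a labelling.} For each edge $uv$ and each endpoint, say $v$, define $\lambda(u\to v)$ to be the number of edges in the longest zigzag path that starts with $u\to v$ and then turns clockwise at $v$. Define $\mu(u\to v)$ in the same way with a counterclockwise turn at $v$. By the assumption, all these values lie in $\{1,\dots,2k\}$.

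The key monotonicity claim is as follows. If $e_i(v)=vw$ and $e_{i+1}(v)=vw'$ are consecutive at $v$, then the edge $w\to v$ (turning clockwise) can be continued by $v\to w'$. Therefore $\lambda(w\to v)\ge \mu(v\to w')+1$, with the appropriate choice of orientation. So along the rotation at each vertex, a suitable combination of the labels of outgoing and incoming edges strictly increases. I want to convert this into the statement that each vertex can be charged at most $k$ edges. The route is to assign each edge to the endpoint where its "odd" label is attained, and then show that at a fixed vertex the charged edges receive pairwise distinct labels from a set of size $k$, namely the odd or the even values in $\{1,\dots,2k\}$. This gives $|E|\le kn$, a contradiction.

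\emph{Main obstacle.} The main obstacle is this counting step: choosing the right label, and the rule for which endpoint an edge is charged to, so that the pigeonhole gives exactly $k$ per vertex rather than $2k$. The parity alternation of zigzags should be what halves the count. The first thing I would try is a single potential $f(u\to v)$ equal to the longest zigzag starting with $u\to v$, with turn direction forced by the parity of the position. I would then show that at each $v$ the values $f(v\to\cdot)$ over the rotation are strictly monotone. Finally, I would charge each edge to the endpoint from which its $f$-value is larger, breaking ties by the rotation order.
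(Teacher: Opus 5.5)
You have not proved the lemma yet. The counting step is where all of its quantitative content lives, you leave it open, and the route you suggest for it does not work.

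Your first step is sound. Zigzags in your sense (each new edge adjacent in the rotation to the previous one, turns alternating) are plane paths by the nested-arc induction you describe. The path zigzags into ever smaller arcs rather than spiralling, but the verification is the one you propose.

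The counting step fails as planned for the following reason. A zigzag is determined by its first directed edge and its first turn. So your inequality $\lambda(w\to v)\ge\mu(v\to w')+1$ is in fact an equality, and it relates the labels of two \emph{different} edges at $v$. It says nothing about how labels are ordered around the rotation at $v$. The strict monotonicity of $f(v\to\cdot)$ you want is false: if all neighbours of $v$ are leaves, then $f(v\to w)=1$ for every neighbour $w$, and the labels of the incoming edges take only the values $1$ and $2$. So there is no per-vertex pigeonhole on distinct labels. A label measuring the longest zigzag \emph{starting} at an edge sees only one half of a path, which is why such labels range up to $2k$ rather than $k$.

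The missing idea is to grow (or count) paths from both ends. The paper does this by peeling. For $k$ rounds, delete at every vertex its extreme edge, alternately the leftmost and the rightmost one. This removes at most $kn<|E|$ edges, so some edge $e$ survives. Going back through the rounds, at each of the two current endpoints you attach the extreme edge of the previous graph. It differs from the current end edge because that edge survived the round. This gains two edges per round, giving $2k+1$ edges in nested intervals. The deletion round is exactly the kind of label you were after: the edges deleted at a given vertex have pairwise distinct rounds in $\{1,\dots,k\}$.

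Your zigzag framework can also be completed, but by a global count rather than by labels:
\begin{itemize}
\item Call a pair $(e,s)$ with $s=\pm1$ a flag. For $e=e_i(u)=e_j(v)$, link it at $u$ to $(e_{i+s}(u),-s)$ and at $v$ to $(e_{j+s}(v),-s)$.
\item These links are symmetric, so the $2|E|$ flags split into maximal zigzags, each a plane path by your first step.
\item A maximal zigzag can end at a vertex $x$ only with one of the two flags $(e_1(x),-1)$ and $(e_{\deg x}(x),+1)$. So there are at most $2n$ ends, hence at most $n$ maximal zigzags.
\item They contain $2|E|>2kn$ flags in total, so one of them has at least $2k+1$ edges.
\end{itemize}
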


\longplanepathsCor*

\begin{proof}
       $Q_d$ has $\frac{d}{2}n$ edges.
    If $d$ is odd, let $k:=\frac{d-1}{2}$. There is a plane path of length at least $2\frac{d-1}{2}+1=d$ by Lemma~\ref{lemma_longplanepaths}.
     If $d$ is even, let $k:=\frac{d-2}{2}$. There is a plane path of length $2\frac{d-2}{2}+1=d-1$ by Lemma~\ref{lemma_longplanepaths}.
\end{proof}

We now define a particular convex-geometric drawing of $Q_d$, which we use to find upper bounds for the maximum sizes of various plane subgraphs in convex-geometric drawings of $Q_d$. 
In our construction, the vertices of $Q_d$ are placed on a circle $C$ obtained from the interval $[0,1]\subset \mathbb{R}$ by identifying the endpoints. With slight abuse of notation, we identify the points of $C$ with the points in $[0,1)$. Now, we define a drawing $\mathcal{H}_2$ of $Q_2$ by mapping the vertices $\overline{00} \to 0$, $
\overline{01}\to\frac{1}{4}$, $\overline{11}\to \frac{1}{2}$, $\overline{10}\to \frac{3}{4}$. For $d\ge 3$, we construct the drawing $\mathcal{H}_d$ by taking two copies $\mathcal{H}', \mathcal{H}''$,  of $\mathcal{H}_{d-1}$ and rotating $\mathcal{H}''$ so that the vertex $\overline{00\dots0}\in \mathcal{H}''$ is placed on the point $\frac{1}{2}-\frac{1}{2^d}\in C$, in other words, we rotate the drawing by slightly less than $180^{\circ}$. We then connect every vertex in $\mathcal{H}'$ to its copy in $\mathcal{H}''$. Now, we map the vertices of $Q_d$ to $\mathcal{H}_d$ by mapping a vertex $\overline{0x}$ to the copy of $x$ in $\mathcal{H}'$ and $\overline{1x}$ to the copy of $x$ in $\mathcal{H}''$. See Figure~\ref{fig:H_d} for the construction of $\mathcal{H}_d$.

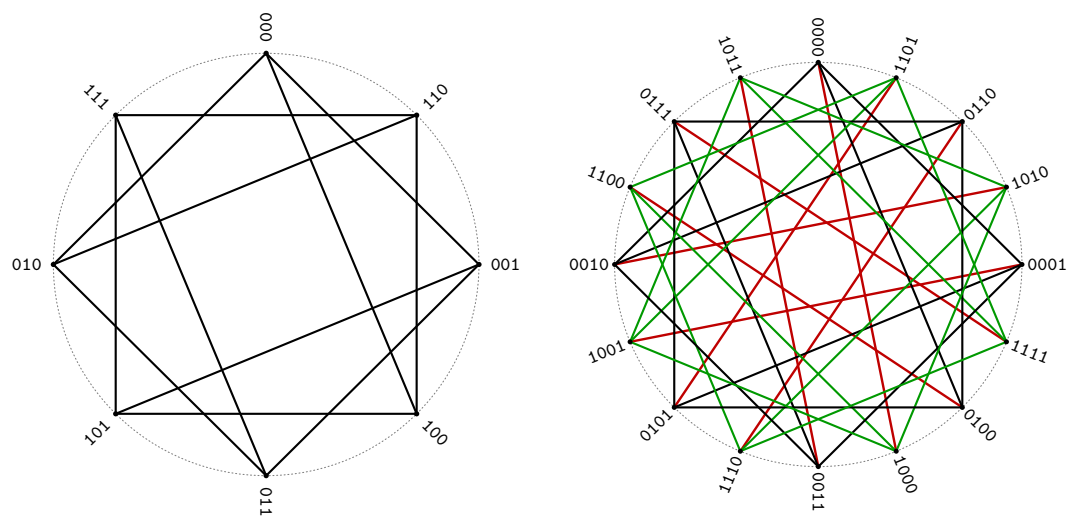
\begin{figure}[ht]

\begin{minipage}{0.4\textwidth}
    \begin{tikzpicture}[line cap=round,line join=round, scale=0.7]

\def\R{4}      
\def\Rl{4.5}   
\def\vdot{1.4pt}

\draw[densely dotted,gray] (0,0) circle (\R);

\foreach \k/\lab in {0/000,2/110,4/001,6/100,8/011,10/101,12/010,14/111}{
  \pgfmathsetmacro{\ang}{90 - 360*\k/16}
  \coordinate (v\lab) at (\ang:\R);
  \coordinate (l\lab) at (\ang:\Rl);
}

\tikzset{
  e1/.style={black, line width=0.9pt},
  e2/.style={black, line width=0.8pt},
  e3/.style={black, line width=0.8pt},
  e4/.style={black, line width=0.8pt},
}


\draw[e2] (v000) -- (v100);
\draw[e2] (v001) -- (v101);
\draw[e2] (v010) -- (v110);
\draw[e2] (v011) -- (v111);

\draw[e3] (v000) -- (v010);
\draw[e3] (v001) -- (v011);
\draw[e3] (v100) -- (v110);
\draw[e3] (v101) -- (v111);

\draw[e4] (v000) -- (v001);
\draw[e4] (v010) -- (v011);
\draw[e4] (v100) -- (v101);
\draw[e4] (v110) -- (v111);

\foreach \k/\lab in {0/000,2/110,4/001,6/100,8/011,10/101,12/010,14/111}{

  \path (v\lab) -- (l\lab)
    node[pos=1, sloped, font=\scriptsize\ttfamily, inner sep=1pt] {\lab};

  \fill (v\lab) circle (\vdot);
}
\end{tikzpicture}
\end{minipage}
\hspace{1.5cm}
\begin{minipage}{0.4\textwidth}
\begin{tikzpicture}[line cap=round,line join=round, scale=0.67]

\def\R{4}      
\def\Rl{4.5}   
\def\vdot{1.4pt}

\draw[densely dotted,gray] (0,0) circle (\R);

\foreach \k/\lab in {0/0000,1/1101,2/0110,3/1010,4/0001,5/1111,6/0100,7/1000,8/0011,9/1110,10/0101,11/1001,12/0010,13/1100,14/0111,15/1011}{
  \pgfmathsetmacro{\ang}{90 - 360*\k/16}
  \coordinate (v\lab) at (\ang:\R);
  \coordinate (l\lab) at (\ang:\Rl);
}

\tikzset{
  e1/.style={red!75!black, line width=0.9pt},
  e2/.style={black, line width=0.8pt},
  e3/.style={green!60!black, line width=0.8pt},
  e4/.style={green!60!black, line width=0.8pt},
}

\draw[e1] (v0000) -- (v1000);
\draw[e1] (v0001) -- (v1001);
\draw[e1] (v0010) -- (v1010);
\draw[e1] (v0011) -- (v1011);
\draw[e1] (v0100) -- (v1100);
\draw[e1] (v0101) -- (v1101);
\draw[e1] (v0110) -- (v1110);
\draw[e1] (v0111) -- (v1111);

\draw[e2] (v0000) -- (v0100);
\draw[e2] (v0001) -- (v0101);
\draw[e2] (v0010) -- (v0110);
\draw[e2] (v0011) -- (v0111);
\draw[e3] (v1000) -- (v1100);
\draw[e3] (v1001) -- (v1101);
\draw[e3] (v1010) -- (v1110);
\draw[e3] (v1011) -- (v1111);

\draw[e2] (v0000) -- (v0010);
\draw[e2] (v0001) -- (v0011);
\draw[e2] (v0100) -- (v0110);
\draw[e2] (v0101) -- (v0111);
\draw[e3] (v1000) -- (v1010);
\draw[e3] (v1001) -- (v1011);
\draw[e3] (v1100) -- (v1110);
\draw[e3] (v1101) -- (v1111);

\draw[e2] (v0000) -- (v0001);
\draw[e2] (v0010) -- (v0011);
\draw[e2] (v0100) -- (v0101);
\draw[e2] (v0110) -- (v0111);
\draw[e3] (v1000) -- (v1001);
\draw[e3] (v1010) -- (v1011);
\draw[e3] (v1100) -- (v1101);
\draw[e3] (v1110) -- (v1111);

\foreach \k/\lab in {0/0000,1/1101,2/0110,3/1010,4/0001,5/1111,6/0100,7/1000,8/0011,9/1110,10/0101,11/1001,12/0010,13/1100,14/0111,15/1011}{

  \path (v\lab) -- (l\lab)
    node[pos=1, sloped, font=\scriptsize\ttfamily, inner sep=1pt] {\lab};

  \fill (v\lab) circle (\vdot);
}

\end{tikzpicture}
\end{minipage}
\caption{The construction of $\mathcal{H}_4$ (right) from $\mathcal{H}_3$ (left). In the right picture, the edges in $\mathcal{H}'$ and $\mathcal{H}''$ are depicted in black and green, respectively, and the red edges depict the ones between these two copies of $\mathcal{H}_3$.}
\label{fig:H_d}
\end{figure}

Observe that, by definition, $\mathcal{H}_d$ is a length-regular drawing with length profile
\begin{equation*}
    (\ell_1,\ell_2,\dots, \ell_d)=(2^{d-1}-1, 2^{d-1}-2,2^{d-2}-4,\dots, 2^{d-1}-2^{d-2},2^{d-1}-{2^{d-2}}).
\end{equation*} For a vertex $v\in \mathcal{H}_d$, let $e_1^v,e_2^v,\dots,e_{d-2}^v$ be the edges adjacent to $v$ with length strictly larger than $2^{d-1}-2^{d-2}$, sorted by length (with $e_1^v$ being the longest) and let $p_v$ be the line through $v$ and the center of $C$. We associate with each such edge $e_i^v$ a \emph{direction} $d(e_i^v)= \pm1$ where $d(e_i^v)= +1 (-1)$ means that the other endpoint of $e_i^v$ lies to the left (right) of $p_v$ when viewed from $v$.  We call the tuple $(d(e_1^v),d(e_2^v),\dots,d(e_{d-2}^v))$ the \emph{length-rotation} of $v$. Note that there are $2^{d-2}$ length-rotations that can appear in $\mathcal{H}_d$, we start by observing that all of them indeed appear in $\mathcal{H}_d$. 

\begin{restatable}{mylemma}{allrotationshere}
\label{lem-allrotationshere}\label{cor:all-length-rotation}
For $d \ge 2$ and $\overline{v} \in \{+1, -1\}^{d-2}$, there are exactly two pairs of antipodal vertices in $\mathcal{H}_d$ with length-rotation $\overline{v}$. Thus, every possible length-rotation exists $4$ times.
\end{restatable}

\begin{proof}
    We proceed by induction on $d$. For $d=2$, the statement is clearly true. Assume that the statement holds for $d\ge 2$. Let $\overline{v}\in \{+1,-1\}^{d-2}$. By our inductive assumption, there are $4$ vertices in $\mathcal{H}_d$ with length-rotation $\overline{v}$, denote them by $u_1,u_2,u_3,u_4$. Recall that $\mathcal{H}_{d+1}$ is created from two copies $\mathcal{H},\mathcal{H}'$ of $\mathcal{H}_d$. Assume that $u_1,u_2,u_3,u_4\in \mathcal{H}$ and let $u_1',u_2',u_3',u_4'\in \mathcal{H}'$ be their copies. Now, note that each $u_i'$ has length-rotation $\overline{v}$ in $\mathcal{H}'$. In $\mathcal{H}_{d+1}$ we connect $u_i$ to $u_i'$ by an edge of length $2^{d-1}-1$ and this edge contributes $+1$ to the length-rotation of $u_i$ and $-1$ to the length-rotation of $u_i'$. This means that there are exactly four vertices in $\mathcal{H}_{d+1}$ with length-rotation $(+1,\overline{v})$ and exactly four vertices with length-rotation $(-1,\overline{v})$. Since $\overline{v}\in \{+1,-1\}^{d-2}$ was arbitrary, this finishes the proof.
\end{proof}


Lastly, we observe that the length-rotations of adjacent vertices in $\mathcal{H}_d$ agree up to the coordinate corresponding to the edge between them. 

\begin{restatable}{mylemma}{ngbhrshavesamerots}
\label{lem-ngbhrshavesamerots}
Let $d \ge 2$ and let $e = uv$ be an edge of length $2^{d-1} - 2^m$ in $\mathcal{H}_d$ for some $m \in \{0,1,\dots,d-2\}$. Then if $\overline{u}$ and $\overline{v}$ are length-rotations of $u$ and $v$, respectively, $\overline{u}_i = \overline{v}_i$ for all $i \in \{m+2,m+3,\dots,d-2\}$.
\end{restatable}

\begin{proof}
    Consider the step in the process of construction of $\mathcal{H}_d$ when $e$ is added. Then $u,v$ are copies of the same vertex in two copies $\mathcal{H},\mathcal{H}'$ of $\mathcal{H}_{m-1}$ and hence have the same length-rotation in $\mathcal{H}$ and $\mathcal{H}'$, respectively. The direction of any of the edges in $\mathcal{H}_m$ are not altered in any subsequent step, so the cooresponding coordinates of $\overline{u}$ and $\overline{v}$ remain the same.
\end{proof}

From now until the end of this section, we present several results about plane subgraphs of $\mathcal{H}_d$. We start by discussing the length of plane subpaths of $\mathcal{H}_d$. 

\begin{restatable}{proposition}{longpathrotatedrawing}
\label{thm-longpathrotatedrawing}
Let $d \ge 4$, then $\mathcal{H}_d$ contains a plane path of length $2d-3$.
\end{restatable}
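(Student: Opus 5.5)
We plan to build the path explicitly as a \emph{zigzag}. In a convex-geometric drawing, a path $v_0v_1\dots v_k$ is plane if the even-indexed vertices $v_0,v_2,v_4,\dots$ lie in clockwise order on one arc of $C$, the odd-indexed vertices $v_1,v_3,\dots$ lie in counterclockwise order on the complementary arc, and the two sequences move towards each other (or both away), as in a zigzag triangulation of a convex polygon. Consecutive edges of such a path share an endpoint. Non-consecutive edges are separated by the chords in between, so they cannot cross. The long edges of $\mathcal{H}_d$, of lengths $2^{d-1}-2^m$ close to $2^{d-1}$, are nearly diameters, which makes them natural rungs of such a zigzag. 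The short edges, of length $2^{d-2}$, are used only at the two ends of the path, where a chord lying entirely on one side of the current ``frontier'' chord cannot cross anything.

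The first step is to record positions explicitly. Unwinding the construction, $\mathcal{H}'$ occupies the even positions $2j/2^d$ and $\mathcal{H}''$ the odd positions shifted by $\tfrac12-\tfrac1{2^d}$. An edge of length $2^{d-1}-2^m$ goes to the left or right of $p_v$ according to the corresponding entry of the length-rotation of $v$. From a vertex $v$ at position $x$, going along long edges of directions $\pm1$ therefore reaches positions $x+\tfrac12\mp 2^m/2^d$. We choose a walk that alternates sides of the circle, taking at each step the long edge whose endpoint is the next vertex in the required monotone direction on the opposite arc. Lemma~\ref{lem-ngbhrshavesamerots} controls how the length-rotation changes along the walk: crossing an edge of dimension $m$ changes only the coordinates up to index $m+1$. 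Lemma~\ref{lem-allrotationshere} lets us pick a starting vertex with a convenient length-rotation, for instance alternating signs, so that the appropriate direction is available at each step. The target is a zigzag using $2(d-2)$ long edges with strictly monotone progress on both arcs, each dimension $1,\dots,d-2$ being used about twice, in decreasing and then increasing order of $m$. To this we add one short edge at an end, giving $2(d-2)+1=2d-3$ edges.

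We would first verify the base case $d=4$ by hand from Figure~\ref{fig:H_d}, exhibiting a plane path with $5$ edges. We would then either give the general path by a closed formula in the binary labels or obtain it inductively. For the inductive route, take the path in the copy $\mathcal{H}'$ of $\mathcal{H}_{d-1}$, extend it at one end by the new near-antipodal edge of length $2^{d-1}-1$, and add one further edge on the other side. The new near-diameter is the longest chord and sits next to an endpoint, so it should bound the whole previous zigzag.

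The main obstacle is the planarity verification: checking that the new edges do not cross the earlier ones. This comes down to showing that the chosen endpoints are monotone on each arc, which requires careful bookkeeping of the positions modulo $1$ and of the rotation offsets $2^m/2^d$ accumulated through the recursive rotations. We would try first to prove the invariant that, after each long edge, the current chord separates all previous vertices from all future ones. Checking that invariant reduces to comparing the sums of the offsets.
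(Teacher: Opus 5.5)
Your plan has a genuine gap: the configuration you aim for cannot exist, and neither can the one your inductive step would produce.

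\textbf{The target count is wrong.} A plane path in $\mathcal{H}_d$ contains at most one edge of length $2^{d-1}-1$; this is the fact used in the proof of Theorem~\ref{thm:drawingswithnolongpaths}. Two such edges cross unless they form a parallel pair. The two edges of a parallel pair cannot lie on one plane path: every other vertex lies strictly beyond one of the two chords, and no edge of $\mathcal{H}_d$ joins an endpoint of one chord to an endpoint of the other (those pairs have lengths $1$ and $2^{d-1}$), so any connecting subpath would cross one of them. Also, for $1\le m\le d-3$ the length $2^{d-1}-2^m$ exceeds $2^d/3$, and at most two pairwise non-crossing chords of such a length exist. Hence a plane path has at most $1+2(d-3)=2d-5$ long edges, not $2(d-2)$. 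To reach length $2d-3$ you therefore need at least two short edges, not one.

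The paper accordingly puts the unique edge of length $2^{d-1}-1$ in the middle, at a vertex with alternating length-rotation. It then grows the path simultaneously at both ends by edges of lengths $2^{d-1}-2,\dots,2^{d-1}-2^{d-3}$, and finishes with one short edge of direction $-1$ at each end. Growing at both ends is what lets Lemma~\ref{lem-ngbhrshavesamerots} synchronize the two endpoints. They are joined by edges of smaller index only, so they agree in the coordinate of the next length. The two new edges then have the same direction, go to opposite sides of the middle edge, and the path stays plane. Your one-directional walk has no such mechanism, and you never specify the choices that would replace it.

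\textbf{The inductive route fails.} Suppose the new chord $e$ of length $2^{d-1}-1$ is attached at an endpoint of the old path $P'\subseteq\mathcal{H}'$. Then all of $P'$ lies in one closed side of $e$, i.e.\ on at most $2^{d-2}+1$ consecutive vertices of $\mathcal{H}'\cong\mathcal{H}_{d-1}$. But a plane subgraph of $\mathcal{H}_k$ supported on at most $2^{k-1}+1$ consecutive vertices has at most $k-1$ edges:
\begin{itemize}
\item a parallel pair of maximal-length edges spans $2^{k-1}+2$ consecutive vertices, so at most one such edge occurs;
\item contracting consecutive pairs as in the proof of Theorem~\ref{thm-planesubgraphrotateddrawing} maps the remaining edges into at most $2^{k-2}+1$ consecutive vertices of $\mathcal{H}_{k-1}$;
\item five consecutive vertices of $\mathcal{H}_3$ carry only two non-crossing edges.
\end{itemize}
So $P'$ would have at most $d-2<2d-5$ edges, and the new near-diameter cannot ``bound the whole previous zigzag''. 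The same bound shows that in any plane path of length $2d-3$ the longest edge has at least $d-3$ path edges on each side.

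Finally, the proposal defers the planarity verification, which is the whole content of the statement. As it stands it is a plan rather than a proof.
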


\begin{proof}
   Assume for simplicity that $d$ is odd (otherwise the argument is similar up to some sign changes). By Lemma~\ref{cor:all-length-rotation}, there is a vertex $v_0\in \mathcal{H}_d$ with length-rotation equal to $(+1,-1,+1,-1,\dots,-1, +1)$. We begin constructing a path starting at the edge  $e= v_0v_1$ of length $2^{d-1}-1$ adjacent to $v_0$. By Lemma~\ref{lem-ngbhrshavesamerots}, we know that the length-rotations of $u_0$ and $u_1$ agree on last $d-3$ coordinates and hence the edges of length $2^{d-1}-2$ adjacent to $u_0$ and $u_1$ both point to the right of $u_0$ and $u_1$ respectively, so we can add them and obtain a plane path of length $3$. We can iterate this process a total of $d-3$ times and obtain a plane path of length $2d-5$. Let $s,t$ be the two endpoints of this path. Recall that both $s$ and $t$ have two edges of length $2^{d-1}-2^{d-2}$ adjacent to them, one of these edges has direction $+1$ and the other $-1$, adding the edges with direction $-1$ yields a path of length $2d-3$ as we desired. Figure~\ref{fig:planepath} depicts this path in $\mathcal{H}_5$.
\end{proof}

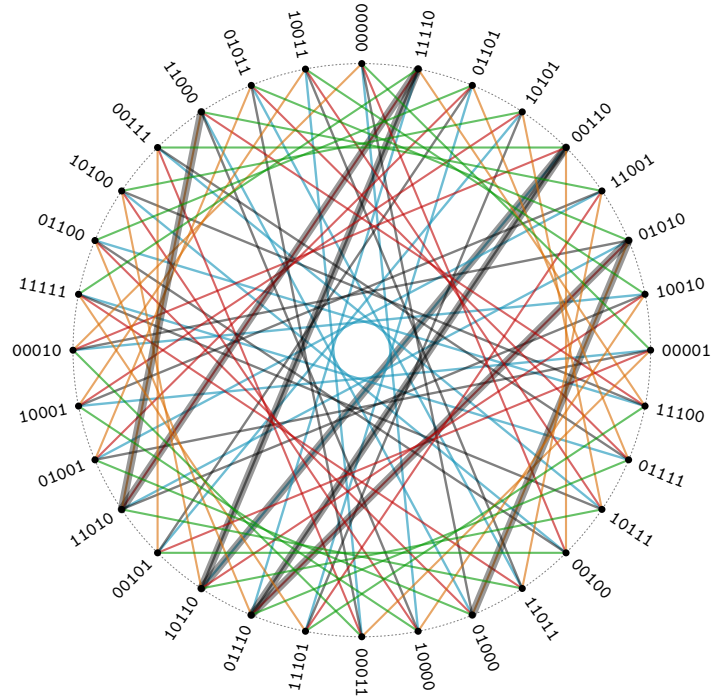
\begin{figure}
    \centering
    \begin{tikzpicture}[line cap=round,line join=round, scale=0.95]

\def\R{4}      
\def\Rl{4.5}   
\def\vdot{1.4pt}
\tikzset{
  highlight/.style={
    black,
    line width=3pt,
    draw opacity=0.4,
    line cap=round
  }
}
\draw[densely dotted,gray] (0,0) circle (\R);

\foreach \k/\lab in {0/00000,2/01101,4/00110,6/01010,8/00001,10/01111,12/00100,14/01000,16/00011,18/01110,20/00101,22/01001,24/00010,26/01100,28/00111,30/01011,
15/10000,17/11101,19/10110,21/11010,23/10001,25/11111,27/10100,29/11000,31/10011,1/11110,3/10101,5/11001,7/10010,9/11100,11/10111,13/11011}{
  \pgfmathsetmacro{\ang}{90 - 360*\k/32}
  \coordinate (v\lab) at (\ang:\R);
  \coordinate (l\lab) at (\ang:\Rl);
}

\tikzset{
e0/.style={cyan!70!black, line width=0.9pt,opacity=0.6},
  e1/.style={black, line width=0.9pt,opacity=0.5},
  e2/.style={red!75!black, line width=0.8pt,opacity=0.6},
  e3/.style={orange!85!black, line width=0.8pt,opacity=0.6},
  e4/.style={green!60!black, line width=0.8pt,opacity=0.6},
}

\draw[e0] (v00000) -- (v10000);
\draw[e0] (v00001) -- (v10001);
\draw[e0] (v00010) -- (v10010);
\draw[e0] (v00011) -- (v10011);
\draw[e0] (v00100) -- (v10100);
\draw[e0] (v00101) -- (v10101);
\draw[e0] (v00110) -- (v10110);
\draw[highlight] (v00110) -- (v10110);
\draw[e0] (v00111) -- (v10111);
\draw[e0] (v01000) -- (v11000);
\draw[e0] (v01001) -- (v11001);
\draw[e0] (v01010) -- (v11010);
\draw[e0] (v01011) -- (v11011);
\draw[e0] (v01100) -- (v11100);
\draw[e0] (v01101) -- (v11101);
\draw[e0] (v01110) -- (v11110);
\draw[e0] (v01111) -- (v11111);

\draw[e1] (v00000) -- (v01000);
\draw[e1] (v00001) -- (v01001);
\draw[e1] (v00010) -- (v01010);
\draw[e1] (v00011) -- (v01011);
\draw[e1] (v00100) -- (v01100);
\draw[e1] (v00101) -- (v01101);
\draw[e1] (v00110) -- (v01110);
\draw[highlight] (v00110) -- (v01110);
\draw[e1] (v00111) -- (v01111);

\draw[e1] (v10000) -- (v11000);
\draw[e1] (v10001) -- (v11001);
\draw[e1] (v10010) -- (v11010);
\draw[e1] (v10011) -- (v11011);
\draw[e1] (v10100) -- (v11100);
\draw[e1] (v10101) -- (v11101);
\draw[e1] (v10110) -- (v11110);
\draw[highlight] (v10110) -- (v11110);
\draw[e1] (v10111) -- (v11111);

\draw[e2] (v00000) -- (v00100);
\draw[e2] (v00001) -- (v00101);
\draw[e2] (v00010) -- (v00110);
\draw[e2] (v00011) -- (v00111);
\draw[e2] (v01000) -- (v01100);
\draw[e2] (v01001) -- (v01101);
\draw[e2] (v01010) -- (v01110);
\draw[highlight] (v01010) -- (v01110);
\draw[e2] (v01011) -- (v01111);

\draw[e2] (v10000) -- (v10100);
\draw[e2] (v10001) -- (v10101);
\draw[e2] (v10010) -- (v10110);
\draw[e2] (v10011) -- (v10111);
\draw[e2] (v11000) -- (v11100);
\draw[e2] (v11001) -- (v11101);
\draw[e2] (v11010) -- (v11110);
\draw[highlight] (v11010) -- (v11110);
\draw[e2] (v11011) -- (v11111);

\draw[e3] (v00000) -- (v00010);
\draw[e3] (v00001) -- (v00011);
\draw[e3] (v00100) -- (v00110);
\draw[e3] (v00101) -- (v00111);
\draw[e3] (v01000) -- (v01010);
\draw[highlight] (v01000) -- (v01010);
\draw[e3] (v01001) -- (v01011);
\draw[e3] (v01100) -- (v01110);
\draw[e3] (v01101) -- (v01111);

\draw[e3] (v10000) -- (v10010);
\draw[e3] (v10001) -- (v10011);
\draw[e3] (v10100) -- (v10110);
\draw[e3] (v10101) -- (v10111);
\draw[e3] (v11000) -- (v11010);
\draw[highlight] (v11000) -- (v11010);
\draw[e3] (v11001) -- (v11011);
\draw[e3] (v11100) -- (v11110);
\draw[e3] (v11101) -- (v11111);

\draw[e4] (v00000) -- (v00001);
\draw[e4] (v00010) -- (v00011);
\draw[e4] (v00100) -- (v00101);
\draw[e4] (v00110) -- (v00111);
\draw[e4] (v01000) -- (v01001);
\draw[e4] (v01010) -- (v01011);
\draw[e4] (v01100) -- (v01101);
\draw[e4] (v01110) -- (v01111);

\draw[e4] (v10000) -- (v10001);
\draw[e4] (v10010) -- (v10011);
\draw[e4] (v10100) -- (v10101);
\draw[e4] (v10110) -- (v10111);
\draw[e4] (v11000) -- (v11001);
\draw[e4] (v11010) -- (v11011);
\draw[e4] (v11100) -- (v11101);
\draw[e4] (v11110) -- (v11111);

\foreach \k/\lab in {0/00000,2/01101,4/00110,6/01010,8/00001,10/01111,12/00100,14/01000,16/00011,18/01110,20/00101,22/01001,24/00010,26/01100,28/00111,30/01011,
15/10000,17/11101,19/10110,21/11010,23/10001,25/11111,27/10100,29/11000,31/10011,1/11110,3/10101,5/11001,7/10010,9/11100,11/10111,13/11011}{

  \path (v\lab) -- (l\lab)
    node[pos=1, sloped, font=\scriptsize\ttfamily, inner sep=1pt] {\lab};

  \fill (v\lab) circle (\vdot);
}

\end{tikzpicture}
    \caption{A plane path of length $7$ in $\mathcal{H}_5$ is highlighted.}
    \label{fig:planepath}
\end{figure}
Let us now consider an alternative construction of $\mathcal{H}_d$ that will be useful for proving our next result. Let $\mathcal{H}_2'=\mathcal{H}_2$. For $d\ge 3$, we obtain $\mathcal{H}_d'$ from $\mathcal{H}_{d-1}'$ as follows. Let $v_1,v_2,\dots, v_{2^{d-1}}$ be vertices of $\mathcal{H}_{d-1}'$ ordered clockwise around $C$. Now for each vertex $v_i$ add another vertex $u_i$ between $v_i$ and $v_{i+1}$. Now for each $i\neq j$ add an edge $u_iu_j$ if and only if edge $v_iv_j$ was present in $\mathcal{H}_{d-1}'$, note that if present, the two edges $u_iu_j$ and $v_iv_j$ necessarily cross. Lastly, for each $i$, we add two parallel edges $v_iu_{i+{2^{d-2}}}$ and $v_{i+2^{d-2}}u_i$ (with indices taken modulo $2^{d-1}$, see Figure~\ref{fig:pairing}. We call the parallel edges added in the last step of construction \emph{parallel pairs}.  It is clear from the definition that if an edge $e$ belongs to a parallel pair, then it is an edge of length $2^{d-1}-1$ that crosses every edge belonging to a parallel pair different from itself. One can observe that $\mathcal{H}_d'$ is weakly isomorphic to $\mathcal{H}_d$.

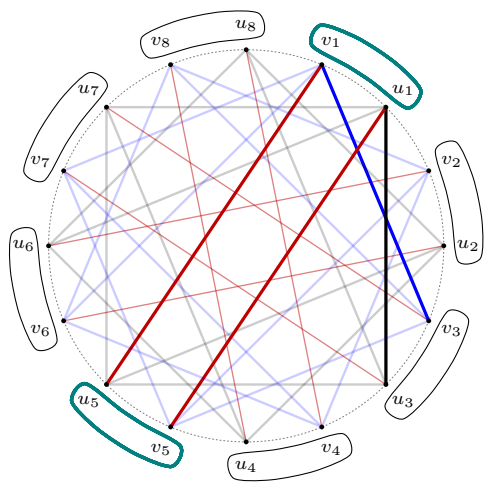
\begin{figure}[ht]
    \centering
    \begin{tikzpicture}[line cap=round,line join=round, scale=0.65]

\def\R{4}      
\def\Rl{4.5cm}   
\def\Rp{4.8cm} 
\def\Rs{4.71cm} 
\def\Rd{4.2cm}
\def\Ri{4.75cm}
\def\Rf{4.3cm}

\def\vdot{1.4pt}

\draw[densely dotted,gray] (0,0) circle (\R);

\foreach \k/\lab in {0/0000,1/1101,2/0110,3/1010,4/0001,5/1111,6/0100,7/1000,8/0011,9/1110,10/0101,11/1001,12/0010,13/1100,14/0111,15/1011}{
  \pgfmathsetmacro{\ang}{90 - 360*\k/16}
 
  \coordinate (v\lab) at (\ang:\R);
  \coordinate (l\lab) at (\ang:\Rl);
  
}

\tikzset{
  e1/.style={black, line width=0.9pt,opacity=0.2},
  e2/.style={red!75!black, line width=0.5pt, opacity=0.5},
   e5/.style={blue, line width=0.9pt, opacity=0.2},
}

\draw[e2] (v0000) -- (v1000);
\draw[e2] (v0001) -- (v1001);
\draw[e2] (v0010) -- (v1010);
\draw[e2] (v0011) -- (v1011);
\draw[e2] (v0100) -- (v1100);
\draw[e2] (v0101) -- (v1101);
\draw[e2] (v0110) -- (v1110);
\draw[e2] (v0111) -- (v1111);

\draw[e1] (v0000) -- (v0100);
\draw[e1] (v0001) -- (v0101);
\draw[e1] (v0010) -- (v0110);
\draw[e1] (v0011) -- (v0111);
\draw[e5] (v1000) -- (v1100);
\draw[e5] (v1001) -- (v1101);
\draw[e5] (v1010) -- (v1110);
\draw[e5] (v1011) -- (v1111);

\draw[e1] (v0000) -- (v0010);
\draw[e1] (v0001) -- (v0011);
\draw[e1] (v0100) -- (v0110);
\draw[e1] (v0101) -- (v0111);
\draw[e5] (v1000) -- (v1010);
\draw[e5] (v1001) -- (v1011);
\draw[e5] (v1100) -- (v1110);
\draw[e5] (v1101) -- (v1111);

\draw[blue, very thick] (v1101) -- (v1111);

\draw[e1] (v0000) -- (v0001);
\draw[black, very thick] (v0110) -- (v0100);
\draw[red!75!black, very thick] (v0110) -- (v1110);
\draw[red!75!black, very thick] (v1101) -- (v0101);

\draw[e1] (v0010) -- (v0011);
\draw[e1] (v0100) -- (v0101);
\draw[e1] (v0110) -- (v0111);
\draw[e5] (v1000) -- (v1001);
\draw[e5] (v1010) -- (v1011);
\draw[e5] (v1100) -- (v1101);
\draw[e5] (v1110) -- (v1111);

\foreach \k/\lab in {8/0000,1/0110,2/0001,3/0100,4/0011,5/0101,6/0010,7/0111}{

  \path (v\lab) -- (l\lab)
    node[pos=1,  font=\scriptsize\ttfamily, inner sep=1pt] {$u_{\k}$};

  \fill (v\lab) circle (\vdot);
}
\foreach \k/\lab in {1/1101,2/1010,3/1111,4/1000,5/1110,6/1001,7/1100,8/1011}{

  \path (v\lab) -- (l\lab)
    node[pos=1, font=\scriptsize\ttfamily, inner sep=1pt] {$v_{\k}$};

  \fill (v\lab) circle (\vdot);
}

\foreach \i in {0,...,7}{

   \pgfmathsetmacro{\angi}{67- 360*(\i)/8- 360*4/55}
   \pgfmathsetmacro{\angul}{67 -360*(\i)/8 +360/72}
 \pgfmathsetmacro{\angio}{67- 360*(\i)/8- 360*4/57.5}
   \pgfmathsetmacro{\angulo}{67 -360*(\i)/8 +360/82}
   
    \pgfmathsetmacro{\anguli}{67 -360*(\i)/8 -360*2/45}
     \pgfmathsetmacro{\angulir}{67 -360*(\i)/8 -360/70}
  \pgfmathsetmacro{\angulio}{67 -360*(\i)/8 -360*2/50}
\pgfmathsetmacro{\anguliro}{67 -360*(\i)/8 -360/55}

  \coordinate (a\i) at ({\angi}:\Rf);
   \coordinate (d\i) at ({\angul}:\Rf);
  \coordinate (h\i) at ({\angio}:\Ri);
   \coordinate (e\i) at ({\angulo}:\Ri);
   
    \coordinate (f\i) at ({\angulir}:\Rp);
   \coordinate (g\i) at ({\anguli}:\Rp);
   
     \coordinate (b\i) at ({\angulio}:\Rd);
     \coordinate (c\i) at ({\anguliro}:\Rd);
   }

\foreach \i in {0,...,7}{
   \draw plot [smooth cycle] coordinates {
  (d\i) (e\i) (f\i) (g\i) (h\i) (a\i) (b\i) (c\i) 
};
 \draw[teal, very thick] plot [smooth cycle] coordinates {
  (d0) (e0) (f0) (g0) (h0) (a0) (b0) (c0) };
   \draw[teal, very thick] plot [smooth cycle, red] coordinates {
  (d4) (e4) (f4) (g4) (h4) (a4) (b4) (c4) };
 }

\end{tikzpicture}
    \caption{The alternative construction of $\mathcal{H}_d$.}
    \label{fig:pairing}
\end{figure}


To show that Proposition~\ref{thm-longpathrotatedrawing} is in fact optimal, we prove the following more general result.

\planesubgraphrotateddrawing*

\begin{proof}
    We proceed by induction on $d$. If $d=3$, the statement can be checked by an easy case distinction. We assume that $d\ge 4$ and that the statement holds for $d-1$. Now we distinguish two cases based on whether $G$ contains some edge of maximal length in $\mathcal{H}_d$ ($2^{d-1}-1$) or not. Assume that the latter is true and let $v_1,v_2,\dots, v_{2^{d-1}}$ and $u_1,u_2, \dots, u_{2^{d-1}}$ be as in the second definition of $H_d$. Recall that for any $i\neq j$ there are either zero or exactly two edges of length smaller than $2^{d-1}-1$ between $(u_i,v_i)$ and $(u_j,v_j)$ and if those edges exist then they necessarily cross. Therefore, $G$ can contain at most one edge connecting a vertex in $(u_i,v_i)$ to a vertex in $(u_j,v_j)$ for each $i\neq j$. Lastly, note that if we replace each pair $(u_i,v_i)$ by a single vertex $w_i$ and we connect $w_i$ to $w_j$ if and only if there was an edge shorter than $\frac{1}{2}-\frac{1}{2^d}$ between $(u_i,v_i)$ and $(u_j,v_j)$ we obtain a drawing of $Q_{d-1}$ which is the same as $\mathcal{H}_{d-1}$ (up to some rotation of the plane). Therefore, the subgraph of $G$ is now a plane subgraph of $\mathcal{H}_{d-1}$ and therefore contains at most $2(d-1)-2 = 2d-4$ edges by our inductive assumption. Otherwise, note that $G$ can contain at most $2$ edges of length $2^{d-1}-1$ as every such edge belongs to a parallel pair and is crossed by all but one other edge of the same length. Removing these edges places us in the first case, and therefore $G$ has at most $2(d-1) -2 +2 = 2d-2$ edges. This finishes the proof.
\end{proof}

We obtain Theorem~\ref{thm:drawingswithnolongpaths} as a consequence of Theorem~\ref{thm-planesubgraphrotateddrawing}. 
\drawingswithnolongpaths*
\begin{proof}
   A plane path in $\mathcal{H}_d$ contains at most one edge of length $2^{d-1}-1$. Then the result follows by the same argument as in Theorem~\ref{thm-planesubgraphrotateddrawing}.  
\end{proof}

We now prove Theorem~\ref{thm:drawingswithnolargematchings}.

\drawingswithnolargematchings*

\begin{proof}
For an edge $e$ of length $\ell$ in any convex drawing of a graph with $n$ vertices, a matching cannot contain more than $\lfloor\frac{n}{l+1}\rfloor$ edges of length $l$. Let $M$ be a plane matching in $\mathcal{H}_d$. The length profile in $\mathcal{H}_d$ is $\{2^{d-1}-1, 2^{d-1}-2,2^{d-1}-4, \cdots, 2^{d-1}-2^{d-3}, 2^{d-1}-2^{d-2},2^{d-1}-2^{d-2}\}$. Therefore, $M$ can contain at most $3$ edges of length $2^{d-1}-2^{d-2}$ and $2$ edges of any other length.

   \textbf{Case 1: $M$ contains $3$ edges of length $2^{d-1}-2^{d-2}$.} We can show that in this case $|E(M)|\le d-1$. First, we prove that the existence of 3 edges of length $2^{d-1}-2^{d-2}$ forbids the matching to have edges of length $2^{d-1}-1$ and $2^{d-1}-2$.  Any edge $j=vw$ in the drawing has two edges of length $2^{d-1}-2^{d-2}$ on one side of $j$, and one on the other side of $j$. Then one of $[v,w]$ or $[w,v]$ contains two of the three edges of length $2^{d-1}-2^{d-2}$ and thus at least  $2(2^{d-1}-2^{d-2})+3=2^{d-1}+3$ vertices of $\mathcal{H}_d$. This implies that the length of $j$ is at most $2^{d-1}-3$ which proves the claim; see the left picture in Figure~\ref{fig:plane matching cases}. Next, we show that in this case, no edge of length $l$ for $l\in \{2^{d-1}-4, 2^{d-1}-8, \dots, 2^{d-1}-2^{d-3}\}$ can appear more than once. The next length in our length profile that is higher than $2^{d-1}-2^{d-2}$ is $2^{d-1}-2^{d-3}$. By a similar argument to the one above, we can see that we cannot have two edges of length at least $2^{d-1}-2^{d-3}$ together with three copies of $2^{d-1}-2^{d-2}$ in the plane matching, see the middle picture in Figure~\ref{fig:plane matching cases}. This proves the claim in Case~1: the size of the plane matching cannot exceed $d-1$ ($3$ copies of the shortest length plus one copy of each length in $\{2^{d-1}-2^{d-3},\cdots,2^{d-1}-2^{2}\}$). 

\begin{figure}[t]
\centering
\begin{tikzpicture}[line cap=round,line join=round, scale=0.45]
\begin{scope}[xshift=-10cm]
    \def\R{4}      
\def\Rl{4.5cm}   
\def\vdot{1.4pt}
\draw[gray] (0,0) circle (\R);
\foreach \k in {0,...,31}{
  \pgfmathsetmacro{\ang}{90 - 360*\k/32}
  \coordinate (v\k) at (\ang:\R);
  \coordinate (l\k) at (\ang:\Rl);
    \fill (v\k) circle (\vdot);
}
\tikzset{
  e1/.style={black, line width=1.1pt},
    e2/.style={blue, dotted, line width=1.1pt}
}
\draw[e1] (v0) -- (v8);
\draw[e1] (v9) -- (v17);
\draw[e1] (v21) -- (v29);
\draw[e2] (v31) -- (v18);
\node  at (-0.6,-.3) {\small$j$};
\end{scope}
\def\R{4}      
\def\Rl{4.5cm}   
\def\vdot{1.4pt}
\draw[gray] (0,0) circle (\R);
\foreach \k in {0,...,31}{
  \pgfmathsetmacro{\ang}{90 - 360*\k/32}
  \coordinate (v\k) at (\ang:\R);
  \coordinate (l\k) at (\ang:\Rl);
    \fill (v\k) circle (\vdot);
}
\tikzset{
  e1/.style={black, line width=1.1pt},
    e2/.style={blue, line width=1.1pt},
        e3/.style={black, dotted, line width=1.1pt}
}
\draw[e1] (v0) -- (v8);
\draw[e1] (v11) -- (v19);
\draw[e2] (v9) -- (v29);
\draw[e2] (v10) -- (v22);
\draw (l29) arc[start angle=90 + 360*3/32, end angle=90 + 360*10/32, radius=\Rl];
\node at (-4,3) {$i$};

\begin{scope}[xshift=10cm]
    
\def\R{4}      
\def\Rl{4.5cm}   
\def\vdot{1.4pt}
\draw[gray] (0,0) circle (\R);
\foreach \k in {0,...,31}{
  \pgfmathsetmacro{\ang}{90 - 360*\k/32}
  \coordinate (v\k) at (\ang:\R);
  \coordinate (l\k) at (\ang:\Rl);
    \fill (v\k) circle (\vdot);
  
}
\tikzset{
  e1/.style={red, line width=1.1pt},
    e2/.style={blue, line width=1.1pt},
        e3/.style={black, dotted, line width=1.1pt}
}
\draw[e1] (v0) -- (v15);
\draw[e2] (v0) -- (v14);
\draw[e2] (v15) -- (v29);
\draw[e2] (v16) -- (v30);
\draw[e1] (v16) -- (v31);
\node at (1,1) {$e$};
\end{scope}
\end{tikzpicture}

    \caption{The cases in the proof of Theorem~\ref{thm:drawingswithnolargematchings}. The black edges have length $2^{d-1}-2^{d-2}$. Left: Showing that in Case~1, an edge $j$ in the plane matching must have length less than $2^{d-1}-2$. Middle: Showing that in Case~1, no other edge length can appear more than once in the plane matching. The two blue lines have length at least $2^{d-1}-2^{d-3}$ and therefore the arc $i$ has length less than $2^{d}-(2(2^{d-1}-2^{d-3})+1)=2^{d-2}-1$. Right: Proof of Case~2, showing that among the depicted lines (2 copies of the two longest lengths) only two can belong to the matching.}
    \label{fig:plane matching cases}
\end{figure}
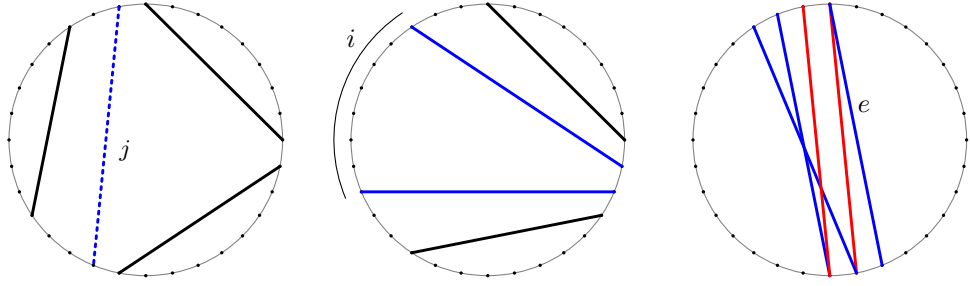
   \textbf{Case 2: $M$ contains at most $2$ edges of length $2^{d-1}-2^{d-2}$.} We claim that $M$ contains at most two edges of length at least $2^{d-1}-2$. We assume that there is an edge $e=vw$ of length $2^{d-1}-2$ in $M$ such that $[v,w]$ contains $2^{d-1}-1$ vertices. Note that it is impossible for $M$ to contain three edges of length $2^{d-1}-1$, so if $M$ contains no edges of length $2^{d-1}-2$, we are done. If there are any other edges of length $2^{d-1}-2$  or higher in $M$, both endpoints of such edges must lie in $[w,v]$. Now, there are two edges of length $2^{d-1}-1$ and two edges of length $2^{d-1}-2$ with both endpoints in $[w,v]$. However, one of the longer edges has either $w$ or $v$ as an endpoint and hence cannot be in $M$. The other edge of length $2^{d-1}-1$ either shares an endpoint with or crosses the two shorter edges, so if it is contained in $M$, none of the shorter edges can be. Lastly, two short edges cross, and hence only one of them can be in $M$. See the right picture in Figure~\ref{fig:plane matching cases} for an illustration. This finishes the proof of the claim in Case~2. 

    By Cases~1 and~2 we conclude that the size of a maximum plane matching is at most $2d-4$. Now, we show that there always exists a matching of this size in $\mathcal{H}_d$. \end{proof}

We now show that the upper bound of Theorem~\ref{thm:drawingswithnolargematchings} is achieved by $\mathcal{H}_d$. Therefore, to improve it, we would need a different construction. In order to prove this, we need the following claim, which is essentially just a stronger version of Lemma~\ref{lem-allrotationshere}. We omit the proof because it is rather tedious and technical.

\begin{remark} \label{lem-allrotsstronger}
For each vertex $x\in V(Q_d)$, its length rotation in $\mathcal{H}_d$ is given by $\overline{x}\in \{-1,+1\}^{d-2}$ with $\overline{x}_i=\begin{cases}
    +1,\text{ if }x_i=0,\\
    -1,\text{ if }x_i=1,
\end{cases}$  
for each $i\in \{1,\ldots,d-2\}$.
Thus, the vertices with length rotation $\underbrace{(+1,+1,\ldots,+1)}_{d-2\text{ times}}$ are exactly the vertices $\overline{0\ldots0x}$ with $x\in V(Q_2)$, which are the non-rotated copies of the vertices from $\mathcal{H}_2$ and thus positioned equidistantly (at distance $2^{d-2}$) around the circle.

Starting at one of those four vertices and going counterclockwise around the circle (e.g. from the vertex at position $0$ to the vertex at position $1-\frac{1}{2^{d}}$ etc.), we see that the first $d-2$ coordinates of the vertices are the binary encodings of the integers from $[0,2^{d-2}]$ in increasing order. By this, we mean that if we go $0\leq m\leq 2^{d-2}-1$ steps counterclockwise from our starting vertex $\overline{\underbrace{0\ldots0}_{d-2}x}$, we obtain a vertex $y_m$ with $m=\sum_{i=1}^{d-2}y_i2^{i-1}$.



\end{remark}


\begin{figure}
    \centering
    \begin{tikzpicture}[line cap=round,line join=round, scale=0.85]

\def\R{4}      
\def\Rl{4.5}   
\def\Rn{5.2}
\def\vdot{1.4pt}
\tikzset{
  highlight/.style={
    black,
    line width=3pt,
    draw opacity=0.4,
    line cap=round
  }
}
\draw[densely dotted,gray] (0,0) circle (\R);

\foreach \k/\lab in {0/00000,2/01101,4/00110,6/01010,8/00001,10/01111,12/00100,14/01000,16/00011,18/01110,20/00101,22/01001,24/00010,26/01100,28/00111,30/01011,
15/10000,17/11101,19/10110,21/11010,23/10001,25/11111,27/10100,29/11000,31/10011,1/11110,3/10101,5/11001,7/10010,9/11100,11/10111,13/11011}{
  \pgfmathsetmacro{\ang}{90 - 360*\k/32}
  \coordinate (v\lab) at (\ang:\R);
  \coordinate (l\lab) at (\ang:\Rl);
   \coordinate (n\lab) at (\ang:\Rn);
}

\tikzset{
e0/.style={cyan!70!black, line width=0.9pt,opacity=0.6},
  e1/.style={black, line width=0.9pt,opacity=0.5},
  e2/.style={red!75!black, line width=0.8pt,opacity=0.6},
  e3/.style={orange!85!black, line width=0.8pt,opacity=0.6},
  e4/.style={green!60!black, line width=0.8pt,opacity=0.6},
}

\draw[e0] (v00000) -- (v10000);
\draw[e0] (v00001) -- (v10001);
\draw[e0] (v00010) -- (v10010);
\draw[e0] (v00011) -- (v10011);
\draw[e0] (v00100) -- (v10100);
\draw[e0] (v00101) -- (v10101);
\draw[e0] (v00110) -- (v10110);
\draw[e0] (v00111) -- (v10111);
\draw[e0] (v01000) -- (v11000);
\draw[highlight,cyan] (v01000) -- (v11000);
\draw[e0] (v01001) -- (v11001);
\draw[e0] (v01010) -- (v11010);
\draw[e0] (v01011) -- (v11011);
\draw[highlight, cyan] (v01011) -- (v11011);
\draw[e0] (v01100) -- (v11100);
\draw[e0] (v01101) -- (v11101);
\draw[e0] (v01110) -- (v11110);
\draw[e0] (v01111) -- (v11111);

\draw[e1] (v00000) -- (v01000);
\draw[e1] (v00001) -- (v01001);
\draw[e1] (v00010) -- (v01010);
\draw[e1] (v00011) -- (v01011);
\draw[e1] (v00100) -- (v01100);
\draw[e1] (v00101) -- (v01101);
\draw[e1] (v00110) -- (v01110);
\draw[e1] (v00111) -- (v01111);

\draw[e1] (v10000) -- (v11000);
\draw[highlight] (v10000) -- (v11000);
\draw[e1] (v10001) -- (v11001);
\draw[e1] (v10010) -- (v11010);
\draw[e1] (v10011) -- (v11011);
\draw[highlight] (v10011) -- (v11011);
\draw[e1] (v10100) -- (v11100);
\draw[e1] (v10101) -- (v11101);
\draw[e1] (v10110) -- (v11110);
\draw[e1] (v10111) -- (v11111);

\draw[e2] (v00000) -- (v00100);
\draw[highlight] (v00000) -- (v00100);
\draw[e2] (v00001) -- (v00101);
\draw[e2] (v00010) -- (v00110);
\draw[e2] (v00011) -- (v00111);
\draw[highlight] (v00011) -- (v00111);
\draw[e2] (v01000) -- (v01100);
\draw[e2] (v01001) -- (v01101);
\draw[e2] (v01010) -- (v01110);
\draw[e2] (v01011) -- (v01111);

\draw[e2] (v10000) -- (v10100);
\draw[e2] (v10001) -- (v10101);
\draw[e2] (v10010) -- (v10110);
\draw[e2] (v10011) -- (v10111);
\draw[e2] (v11000) -- (v11100);
\draw[e2] (v11001) -- (v11101);
\draw[e2] (v11010) -- (v11110);
\draw[e2] (v11011) -- (v11111);

\draw[e3] (v00000) -- (v00010);
\draw[e3] (v00001) -- (v00011);
\draw[e3] (v00100) -- (v00110);
\draw[e3] (v00101) -- (v00111);
\draw[e3] (v01000) -- (v01010);
\draw[e3] (v01001) -- (v01011);
\draw[e3] (v01100) -- (v01110);
\draw[e3] (v01101) -- (v01111);

\draw[e3] (v10000) -- (v10010);
\draw[e3] (v10001) -- (v10011);
\draw[e3] (v10100) -- (v10110);
\draw[e3] (v10101) -- (v10111);
\draw[e3] (v11000) -- (v11010);
\draw[e3] (v11001) -- (v11011);
\draw[e3] (v11100) -- (v11110);
\draw[highlight] (v11100) -- (v11110);
\draw[e3] (v11101) -- (v11111);
\draw[highlight] (v11101) -- (v11111);

\draw[e4] (v00000) -- (v00001);
\draw[e4] (v00010) -- (v00011);
\draw[e4] (v00100) -- (v00101);
\draw[e4] (v00110) -- (v00111);
\draw[e4] (v01000) -- (v01001);
\draw[e4] (v01010) -- (v01011);
\draw[e4] (v01100) -- (v01101);
\draw[e4] (v01110) -- (v01111);

\draw[e4] (v10000) -- (v10001);
\draw[e4] (v10010) -- (v10011);
\draw[e4] (v10100) -- (v10101);
\draw[e4] (v10110) -- (v10111);
\draw[e4] (v11000) -- (v11001);
\draw[e4] (v11010) -- (v11011);
\draw[e4] (v11100) -- (v11101);
\draw[e4] (v11110) -- (v11111);

\foreach \k/\lab in {0/00000,2/01101,4/00110,6/01010,8/00001,10/01111,12/00100,14/01000,16/00011,18/01110,20/00101,22/01001,24/00010,26/01100,28/00111,30/01011,
15/10000,17/11101,19/10110,21/11010,23/10001,25/11111,27/10100,29/11000,31/10011,1/11110,3/10101,5/11001,7/10010,9/11100,11/10111,13/11011}{

  \path (v\lab) -- (l\lab)
    node[pos=1, sloped, font=\scriptsize\ttfamily, inner sep=1pt] {\lab};

  \fill (v\lab) circle (\vdot);
}
 \path (v01000) -- (n01000)
    node[pos=1, font=\footnotesize\ttfamily, inner sep=1pt] {$v_0$};
 \path (v10000) -- (n10000)
    node[pos=1, font=\footnotesize\ttfamily, inner sep=1pt] {$v_1$};
 \path (v00011) -- (n00011)
    node[pos=1, font=\footnotesize\ttfamily, inner sep=1pt] {$v_2$};   
\path (v11101) -- (n11101)
    node[pos=1, font=\footnotesize\ttfamily, inner sep=1pt] {$v_3$}; 
 \path (v01011) -- (n01011)
    node[pos=1, font=\footnotesize\ttfamily, inner sep=1pt] {$w_0$};
 \path (v10011) -- (n10011)
    node[pos=1, font=\footnotesize\ttfamily, inner sep=1pt] {$w_1$};
 \path (v00000) -- (n00000)
    node[pos=1, font=\footnotesize\ttfamily, inner sep=1pt] {$w_2$};   
\path (v11110) -- (n11110)
    node[pos=1, font=\footnotesize\ttfamily, inner sep=1pt] {$w_3$};     

\end{tikzpicture}
    \caption{A plane matching of length $6$ in $\mathcal{H}_5$ from the proof of Proposition~\ref{prop:existencematching} is highlighted in grey and vertices $v_0,\dots,v_3$ and $w_0,\dots,w_3$ are marked. The two edges highlighted in cyan form a parallel pair that is uncrossed by the matching and can be added to form a larger plane subgraph.}
    \label{fig:planematching}
\end{figure}
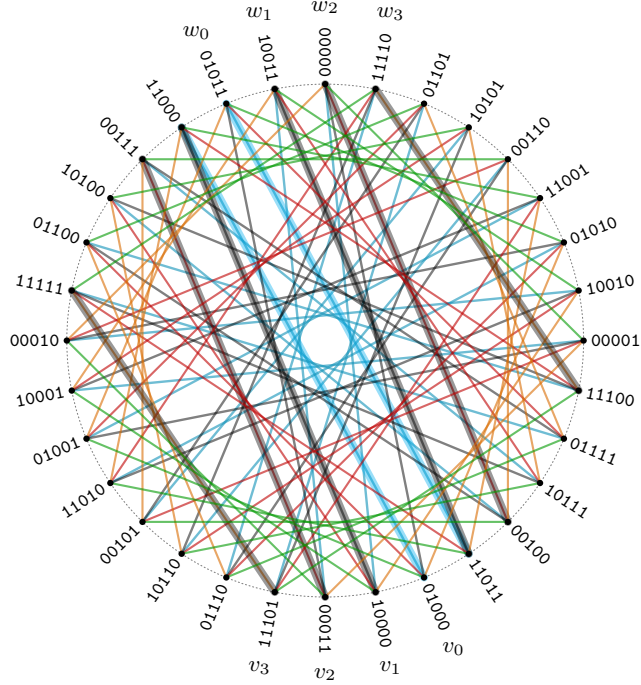

\begin{proposition}\label{prop:existencematching}
For every $d\ge 3$, $\mathcal{H}_d$ contains a plane matching on $2d-4$ edges and a plane subgraph on $2d-2$ edges. 
\end{proposition}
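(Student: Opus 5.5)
We build the plane matching explicitly, guided by Figure~\ref{fig:planematching}, and then enlarge it by one parallel pair to get the plane subgraph. Throughout we use the coordinatization of Remark~\ref{lem-allrotsstronger} and the recursive structure of $\mathcal{H}_d$: $\mathcal{H}'$ and $\mathcal{H}''$ are copies of $\mathcal{H}_{d-1}$, and the second copy is rotated by $\frac12-\frac1{2^d}$.

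\textbf{The matching.} The idea is to use two edges of each length class $2^{d-1}-2^m$ for $m=1,\dots,d-2$. This gives $2(d-2)=2d-4$ edges and omits the longest class $2^{d-1}-1$. The edges are arranged as two families $v_0v_1,v_1v_2,\dots$ and $w_0w_1,\dots$ of nested chords lying in two ``opposite'' regions of the circle.

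In the figure for $d=5$, the matching edges come in antipodal-ish pairs of equal length: two of length $2^{d-1}-2$, two of length $2^{d-1}-4$, and two of length $2^{d-1}-8$. The chords of each pair are disjoint and parallel-looking. Consecutive classes are placed so that the shorter pair sits strictly inside the arcs cut off by the longer pair.

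For the general statement we proceed by induction on $d$ using the alternative construction $\mathcal{H}_d'$. Given a plane matching $M_{d-1}$ of size $2d-6$ in $\mathcal{H}_{d-1}'$ that avoids edges of maximal length, we lift each edge $v_iv_j$ to either $v_iv_j$ or $u_iu_j$. Lifting keeps the matching plane, since the new vertices $u_i$ sit immediately after the $v_i$, and non-crossing chords remain non-crossing when endpoints are perturbed consistently. We then add two new edges of length $2^{d-1}-2$, coming from the first level of the recursion. The base case $d=3$ is a matching of two disjoint edges of length $2$ (for example $\overline{000}\,\overline{001}$ and $\overline{110}\,\overline{111}$), checked directly.

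An alternative that may be cleaner is to write the $2d-4$ edges down explicitly. We use Remark~\ref{lem-allrotsstronger} to compute each endpoint's position as a function of its binary label. We then verify non-crossing by comparing the cyclic order of endpoints: two chords $ab$ and $cd$ are non-crossing iff $c,d$ lie in the same interval determined by $a,b$.

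\textbf{The plane subgraph.} Having the matching, we look for a parallel pair from the last step of the construction of $\mathcal{H}_d'$, i.e.\ two edges $v_iu_{i+2^{d-2}}$ and $v_{i+2^{d-2}}u_i$ of length $2^{d-1}-1$, that crosses none of the matching edges. As in the figure, this pair should run through the ``gap'' between the two nested families. Since the matching uses no vertex of this pair, or only matched vertices at compatible positions, adding both edges gives a plane subgraph with $2d-4+2=2d-2$ edges. This matches the bound of Theorem~\ref{thm-planesubgraphrotateddrawing}.

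\textbf{Main obstacle.} The hard step is the rigorous non-crossing check for general $d$, because positions in $\mathcal{H}_d$ are given by a nontrivial rotation recursion. I would first try to handle it by induction, arguing that the lifted matching leaves the chosen parallel pair's strip free at every level. If that becomes messy, I would fall back on explicit position formulas from Remark~\ref{lem-allrotsstronger} and a direct interval-containment computation.
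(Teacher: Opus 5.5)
\textbf{There is a genuine gap.} You have the right target shape: one edge of each length $2^{d-1}-2^m$, $m=1,\dots,d-2$, in each of two nested families, with a parallel pair of length $2^{d-1}-1$ between them. But you never say which edges to take or why they do not cross, and that is the entire content of the proposition.

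The induction you sketch does not close with the hypothesis you state. The two new edges of length $2^{d-1}-2$ in $\mathcal{H}_d'$ are lifts of maximal-length edges of $\mathcal{H}_{d-1}'$. Every vertex has exactly one such edge, and the two lifts $v_iv_j,u_iu_j$ of the same edge cross. Maximal-length edges from different parallel pairs cross, and lifting preserves crossings of chords with disjoint endpoints. So your two new edges must project to a single parallel pair of $\mathcal{H}_{d-1}'$ that essentially avoids $M_{d-1}$. That means you need the $2(d-1)-2$ plane-subgraph statement one level down, not merely ``a plane matching avoiding maximal-length edges.'' On top of that, you would have to track a free region for the new parallel pair at level $d$. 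Neither is in your invariant.

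Your base case is also wrong. In $\mathcal{H}_3$ the clockwise order is $\overline{000},\overline{110},\overline{001},\overline{100},\overline{011},\overline{101},\overline{010},\overline{111}$, so $\overline{000}\,\overline{001}$ and $\overline{110}\,\overline{111}$ cross. A valid choice is the matching $\overline{000}\,\overline{001}$, $\overline{011}\,\overline{010}$, completed to four edges by the parallel pair $\overline{000}\,\overline{100}$, $\overline{011}\,\overline{111}$. In addition, the ``families $v_0v_1,v_1v_2,\dots$'' misread the figure. The $v_i$ are consecutive points of $C$, and such points are never adjacent in $\mathcal{H}_d$ for $d\ge 3$.

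The missing idea is how to find, for general $d$, vertices whose edges of the prescribed lengths all point the same way, so that the chords nest. Directions of same-length edges at consecutive vertices are not uniform in $\mathcal{H}_d$. The paper carries out exactly your ``fallback'':
\begin{itemize}
\item By Remark~\ref{lem-allrotsstronger}, the length-rotation of a vertex is read off its first $d-2$ label bits, and walking along $C$ from $\overline{0\cdots0x}$ runs through consecutive binary integers.
\item Claim~\ref{claim:numbertheory} supplies $x_{d-2}$ such that, starting at $v_0=\overline{x_{d-2}00}$, each of the next $d-2$ consecutive vertices $v_i$ has its edge of length $2^{d-1}-2^i$ with direction $+1$.
\item Once all directions agree, planarity is a one-line count. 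Having $v_i'\in(v_j,v_j']$ for $i<j$ would force $2^{d-1}-2^j+j-i\ge 2^{d-1}-2^i$, i.e.\ $2^j-2^i\le j-i$, which is impossible.
\item The second family is built the same way from the antipode $w_0=\overline{x_{d-2}11}$. The parallel pair at $v_0,w_0$ separates the two families, giving the $2d-4$ matching and the $2d-2$ subgraph.
\end{itemize}
Without an analogue of that claim, your interval-containment check has no concrete configuration to verify.
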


\begin{proof}   
Our construction is based on a claim about non-negative integers.
\begin{claim} \label{claim:numbertheory}
For each $k\in\mathbb{N}_{\geq 1}$ there is an element $x_k\in\{0,\ldots,2^{k}-1\}$ such that for each $i\in \{0,\ldots,k-1\}$:
$x_k-i\in \{0,1,\ldots,2^{i}-1\}+2^{i+1}\mathbb{N}_0$, i.e., the coefficient of $2^{i}$ in the binary representation of $x_k-i$ is $0$.\end{claim}
\begin{claimproof}
For $k=1$ and thus $i=0$, we have  $x_1=0\in \{0\}+2\mathbb{N}_0$.
Now let us assume the claim is true for some $k\in\mathbb{N}_{\geq 1}$ and let $x_k$ be such an element.
As $x_k\in\{0,\ldots,2^{k}-1\}$ and $x_k-(k-1)\in \{0,1,\ldots,2^{k-1}-1\}+2^{k}\mathbb{N}_0$, we have $x_k-k\in \{-1,0,1,\ldots,2^{k-1}-2\}$.
If $x_k-k\in \{0,1,\ldots,2^{k-1}-2\}\subseteq \{0,1,\ldots,2^{k}-1\}+2^{k+1}\mathbb{N}_0$, we see that $x_{k+1}:=x_k$ fulfills the claim (for the missing $i=k$ we had to check).
Otherwise we have $x_k=k-1$ and we set $x_{k+1}:=k-1+2^{k}$.
For $i\in \{0,\ldots,k-1\}$, we have $x_{k+1}-i=x_k-i+2^{k}\in \{0,1,\ldots,2^{i}-1\}+2^{i+1}\mathbb{N}_0$
and for $i=k$, we have
$x_{k+1}-i=2^{k}-1\in \{0,1,\ldots,2^{k}-1\}+2^{k+1}\mathbb{N}_0.$ Thus, we have inductively proven the claim. \end{claimproof}

We now construct our matching. Let $x_{d-2}$ as in the Claim~\ref{claim:numbertheory}, we will identify natural numbers with their binary representation in what follows. Let $v_0$ be the vertex of $\mathcal{H}_d$ corresponding to $\overline{x_{d-2}00}$ and $w_0$ be the vertex of $\mathcal{H}_d$ corresponding to $\overline{x_{d-2}11}$, which are antipodal by Remark~\ref{lem-allrotsstronger}. Note that the two edges of length $2^{d-1}-1$ adjacent to $v_0$ and $w_0$ induce a parallel pair. Now, let $v_1,v_2,\dots, v_{d-2}$ be vertices of $\mathcal{H}_d$ consecutive on the circle $C$ when traversing it clockwise from $v_0$ and define $w_1,w_2,\dots w_{d-2}$, analogously. Now let $M_v$ be the set of edges containing the second longest (i.e., of length $2^{d-1}-2$) edge adjacent to $v_1$, third longest (i.e., of length $2^{d-1}-2^2$) edge adjacent to $v_2$ and so on. That is, for vertex $v_i$, we include the edge of length $2^{d-1}-2^i$ incident to $v_i$ in $M_v$. See Figure~\ref{fig:planematching} for the construction of $M_v$.  By Remark~\ref{lem-allrotsstronger} and Claim~\ref{claim:numbertheory}, we know that each edge in $M_v$ contributes $+1$ to the length rotation of the vertex $v_i$ to which it is adjacent \footnote{This is not technically correct for $v_{d-2}$ as it is adjacent to an edge of length $2^{d-1}-2^{d-2}$ which is not included in the length-rotation but as there are two such edges, we can assume that we are using the one that would contribute $+1$ (i.e., the one that is to the left of~$v_{d-2}$)}. We show that $M_v$ is a plane matching on $d-2$ edges. Assume that there are two edges $v_iv_i'$ and $v_jv_j'$ in $M_v$ for $1\le i < j \le d-2$ that are not disjoint (i.e., they either cross or share a vertex). If these edges are not disjoint, it means that $v_i'$ lies inside the interval $(v_j,v_j']$. Note that this interval contains $2^{d-1}- 2^j$ vertices of $\mathcal{H}_d$. Furthermore, $(v_i,v_j]$ contains at most $j-i$ vertices of $\mathcal{H}_d$. Therefore, the interval $(v_i,v_j']$ can contain at most $2^{d-1} - 2^j + j - i$ vertices. On the other hand, by the definition of $M_v$ the edge $v_iv_i'$ is of length $2^{d-1}-2^i$ and the interval $(v_i,v_i']$ contains exactly $2^{d-1} - 2^i$ vertices of $\mathcal{H}_d$. Hence, $v_i'$ can lie inside the interval $(v_j,v_j']$ only if $2^{d-1} - 2^j + j - i \ge  2^{d-1} - 2^i.$ This holds if and only if $2^j - 2^i \le j-i $ which is impossible if both $i,j$ are positive. Therefore, $M_v$ is a plane matching on $d-2$ edges.

 Analogously to $M_v$, we define $M_w$ using the vertices $w_1,w_2,\dots w_{d-2}$. It is again a plane matching on $d-2$ edges by the same arguments as before. Furthermore, by the definition of $M_v$ and $M_w$, none of the edges crosses the parallel pair induced by the edges of length $2^{d-1}-1$ adjacent to $v_0$ and $w_0$, and therefore, the disjoint union $M_v \cup M_w$ is a plane matching on $2d-4$ edges as we wanted. Additionally, $M_v \cup M_w\cup\{v_0v_0',w_0w_0'\}$ induces a plane subgraph with $2d-2$ edges. \qedhere





\end{proof}

Lastly, we give a necessary condition for an abstract graph to be embeddable  into every rectilinear drawing of the $d$-cube (for a sufficiently large $d$).

\graphsembeddableinQd*

\begin{proof}
 We first show that $G$ must be acyclic. Since $Q_d$ is bipartite, consider a bipartition of it and let $A_d,B_d$ be the two parts. Now, consider a drawing  $\mathcal{G}$ of $Q_d$ in which the vertices of $A_d$ and $B_d$ are placed on a circle in such a way that there is a line $\ell$ such that all vertices of $A_d$ lie on one side of $\ell$ and all vertices of $B_d$ lie on the other side. Now, if there is a plane cycle $\mathcal{C}$ of length $k\ge 4$ in $\mathcal{G}$, we can label its vertices $a_1,b_1,a_2,b_2,\dots, a_k,b_k$ in such a way that for each $i$, $a_ib_i$, and $b_ia_{i+1}$ are edges of $\mathcal{C}$. Consider the edges $a_1b_1$, $b_1a_2$, and $b_ka_1$. Since $\mathcal{C}$ is  plane, all of the vertices $b_2,b_3,\dots,b_k$ must lie in the same halfplane determined by the line through $a_2$ and $b_1$. However, $a_1$ must lie in the other halfplane determined by this line; otherwise, edges $a_1b_1$ and $a_2b_2$ would cross. However, this means that the edge $b_ka_1$ must cross the edge $b_1a_2$. Therefore, $\mathcal{G}$ does not contain any plane cycles. 
 Now, to prove that $G$ must be a forest of caterpillars, it suffices to prove that for all $d\ge 2$, $\mathcal{H}_d$ does not contain the graph $G_0$, obtained by  subdividing each edge of $K_{1,3}$ once, as a plane subgraph.  Assume for contradiction that there is some $d$ such that $\mathcal{H}_d$ contains a plane copy of $G_0$. Let $v$ be the unique degree three vertex in $G_0$, and let $e_1=vw_1, e_2=vw_2, e_3=vw_3$ be the three edges adjacent to $v$ in $G_0$ and let $w_1',w_2'$ and $w_3'$ be the leaves in $G_0$ adjacent to $w_1,w_2$ and $w_3$, respectively. Without loss of generality, we can assume that in $\mathcal{H}_d$, $G_0$ is embedded so that $e_1,e_2,e_3$ appear in this order clockwise around $v$. 
 Now, up to symmetry, the remaining edges of $G_0$ are embedded in one of the three ways depicted in Figure~\ref{fig:3configurations}. For the remainder of the proof, we say that a vertex $w_i'$ lies to the left of $w_i$ if it lies in the interval $(w_i,v)$, and that it lies to the right of $w_i$ if it lies in the interval $(v,w_i)$.

 \begin{figure}[ht]
     \centering
     \begin{minipage}{0.3\textwidth}
         \begin{tikzpicture}[line cap=round,line join=round, scale=0.45]

\def\R{4}      
\def\Rl{4.5cm}   
\def\vdot{1.4pt}
\draw[gray] (0,0) circle (\R);
\foreach \k in {0,...,30}{
  \pgfmathsetmacro{\ang}{90 - 360*\k/31}
  \coordinate (v\k) at (\ang:\R);
  \coordinate (l\k) at (\ang:\Rl);
}
\tikzset{
  e1/.style={black, line width=0.9pt}
}
\draw[e1] (v0) -- (v8);
\draw[e1] (v0) -- (v15);
\draw[e1] (v0) -- (v22);
\draw[e1] (v8) -- (v13);
\draw[e1] (v15) -- (v20);
\draw[e1] (v22) -- (v28);
\path (v0) -- (l0)
    node[pos=1,  font=\scriptsize\ttfamily, inner sep=1pt] {$v$};
    \path (v8) -- (l8)
    node[pos=1,  font=\scriptsize\ttfamily, inner sep=1pt] {$w_1$};
    \path (v13) -- (l13)
    node[pos=1,  font=\scriptsize\ttfamily, inner sep=1pt] {$w_1'$};
    \path (v15) -- (l15)
    node[pos=1,  font=\scriptsize\ttfamily, inner sep=1pt] {$w_2$};
    \path (v20) -- (l20)
    node[pos=1,  font=\scriptsize\ttfamily, inner sep=1pt] {$w_2'$};
    \path (v22) -- (l22)
    node[pos=1,  font=\scriptsize\ttfamily, inner sep=1pt] {$w_3$};
        \path (v28) -- (l28)
    node[pos=1,  font=\scriptsize\ttfamily, inner sep=1pt] {$w_3'$};
    \fill (v0) circle (\vdot);
    \fill (v8) circle (\vdot);
    \fill (v13) circle (\vdot);
    \fill (v15) circle (\vdot);
    \fill (v20) circle (\vdot);
    \fill (v22) circle (\vdot);
    \fill (v28) circle (\vdot);
    \node  at (1.5,1.8) {\small$e_1$};
    \node  at (-0.4,-.3) {\small$e_2$};
    \node  at (-1.7,1) {\small$e_3$};
\end{tikzpicture}
     \end{minipage}
     \hfill
           \begin{minipage}{0.3\textwidth}

           \begin{tikzpicture}[line cap=round,line join=round, scale=0.45]

\def\R{4}      
\def\Rl{4.5cm}   
\def\vdot{1.4pt}
\draw[gray] (0,0) circle (\R);
\foreach \k in {0,...,30}{
  \pgfmathsetmacro{\ang}{90 - 360*\k/31}
  \coordinate (v\k) at (\ang:\R);
  \coordinate (l\k) at (\ang:\Rl);
}
\tikzset{
  e1/.style={black, line width=0.9pt}
}
\draw[e1] (v0) -- (v8);
\draw[e1] (v0) -- (v20);
\draw[e1] (v0) -- (v22);
\draw[e1] (v13) -- (v8);
\draw[e1] (v20) -- (v15);
\draw[e1] (v22) -- (v28);
\path (v0) -- (l0)
    node[pos=1,  font=\scriptsize\ttfamily, inner sep=1pt] {$v$};
    \path (v8) -- (l8)
    node[pos=1,  font=\scriptsize\ttfamily, inner sep=1pt] {$w_1$};
    \path (v13) -- (l13)
    node[pos=1,  font=\scriptsize\ttfamily, inner sep=1pt] {$w_1'$};
    \path (v20) -- (l20)
    node[pos=1,  font=\scriptsize\ttfamily, inner sep=1pt] {$w_2$};
    \path (v15) -- (l15)
    node[pos=1,  font=\scriptsize\ttfamily, inner sep=1pt] {$w_2'$};
    \path (v22) -- (l22)
    node[pos=1,  font=\scriptsize\ttfamily, inner sep=1pt] {$w_3$};
        \path (v28) -- (l28)
    node[pos=1,  font=\scriptsize\ttfamily, inner sep=1pt] {$w_3'$};
    \fill (v0) circle (\vdot);
    \fill (v8) circle (\vdot);
    \fill (v13) circle (\vdot);
    \fill (v15) circle (\vdot);
    \fill (v20) circle (\vdot);
    \fill (v22) circle (\vdot);
    \fill (v28) circle (\vdot);
        \node  at (1.5,1.8) {\small$e_1$};
    \node  at (-1.4,-.3) {\small$e_2$};
    \node  at (-1.9,2.3) {\small$e_3$};
\end{tikzpicture}
      \end{minipage}
      \hfill
      \begin{minipage}{0.3\textwidth}
         \begin{tikzpicture}[line cap=round,line join=round, scale=0.45]

\def\R{4}      
\def\Rl{4.5cm}   
\def\vdot{1.4pt}
\draw[gray] (0,0) circle (\R);
\foreach \k in {0,...,30}{
  \pgfmathsetmacro{\ang}{90 - 360*\k/31}
  \coordinate (v\k) at (\ang:\R);
  \coordinate (l\k) at (\ang:\Rl);
}
\tikzset{
  e1/.style={black, line width=0.9pt}
}
\draw[e1] (v0) -- (v12);
\draw[e1] (v0) -- (v15);
\draw[e1] (v0) -- (v22);
\draw[e1] (v12) -- (v4);
\draw[e1] (v15) -- (v20);
\draw[e1] (v22) -- (v28);
\path (v0) -- (l0)
    node[pos=1,  font=\scriptsize\ttfamily, inner sep=1pt] {$v$};
    \path (v12) -- (l12)
    node[pos=1,  font=\scriptsize\ttfamily, inner sep=1pt] {$w_1$};
    \path (v4) -- (l4)
    node[pos=1,  font=\scriptsize\ttfamily, inner sep=1pt] {$w_1'$};
    \path (v15) -- (l15)
    node[pos=1,  font=\scriptsize\ttfamily, inner sep=1pt] {$w_2$};
    \path (v20) -- (l20)
    node[pos=1,  font=\scriptsize\ttfamily, inner sep=1pt] {$w_2'$};
    \path (v22) -- (l22)
    node[pos=1,  font=\scriptsize\ttfamily, inner sep=1pt] {$w_3$};
        \path (v28) -- (l28)
    node[pos=1,  font=\scriptsize\ttfamily, inner sep=1pt] {$w_3'$};
    \fill (v0) circle (\vdot);
    \fill (v4) circle (\vdot);
    \fill (v12) circle (\vdot);
    \fill (v15) circle (\vdot);
    \fill (v20) circle (\vdot);
    \fill (v22) circle (\vdot);
    \fill (v28) circle (\vdot);
    \node  at (1.5,1.8) {\small$e_1$};
    \node  at (-0.4,-.3) {\small$e_2$};
    \node  at (-1.7,1) {\small$e_3$};
\end{tikzpicture}
      \end{minipage}
      
     \caption{Three possible cases of $G_0$ in the proof of Theorem~\ref{thm:graphsembeddableinQd}}
     \label{fig:3configurations}
 \end{figure}
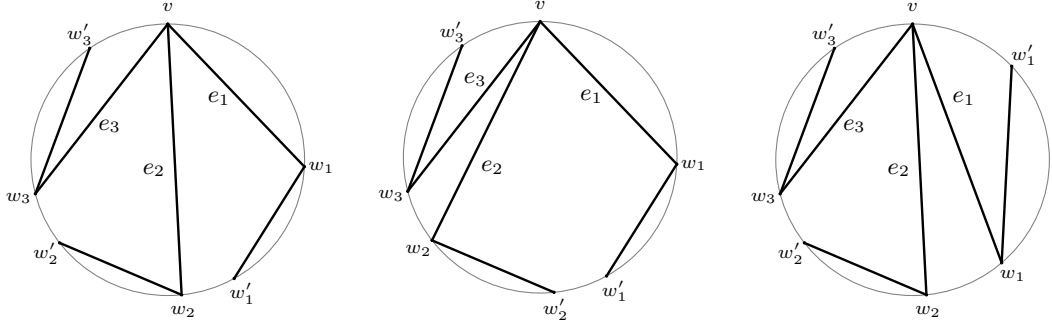
  
\textbf{Case 1:  $w_i'$ is placed to the left of $w_i$ for every $i\in \{1,2,3\}$, as in the left part of Figure~\ref{fig:3configurations}.} In this case, the edges $e_1$ and $e_3$ are of length at least $2^{d-1}-2^{d-2}$, and hence the vertices $w_1,w_1',w_2,w_2',w_3$ are all contained in the interval $[w_1, w_3]$, which contains at most \begin{equation*}
    2^d- (2^{d-1}-2^{d-2}) - (2^{d-1} - 2^{d-2}) +1 = 2^{d-1}+1
\end{equation*}  vertices of $\mathcal{H}_d$. However, the edges $w_1w_1'$ and $w_2w_2'$ are both of length at least $2^{d-1}-2^{d-2}$, and therefore the interval $[w_1,w_2']$ needs to contain at least \begin{equation*}
    2(2^{d-1} - 2^{d-2})+1 = 2^{d-1}+1
\end{equation*}
vertices of $\mathcal{H}_d$, which is impossible since $[w_1,w_2']$ is properly contained in the interval $[w_1,w_3]$. 

\textbf{Case 2: the leaves $w_1',w_3'$ are to the left of $w_1$ and $w_3$, while $w_2'$ is placed to the right of $w_2$, as in the middle part of Figure~\ref{fig:3configurations}.} Then, the vertices $w_1,w_1',w_2',w_2$ are all contained inside the interval $[w_1,w_2]$. As before, the edge $e_1$ is of length at least $2^{d-1}-2^{d-2}$, and $e_2$ is of length at least $2^{d-1}-2^{d-3}$, so the interval $[w_1,w_2]$ contains at most $2^{d-2}+2^{d-3}+1$ vertices of $\mathcal{H}_d$.  On the other hand, the edges $w_1w_1'$ and $w_2w_2'$ are both of length at least $2^{d-1}-2^{d-2}$ and $w_1' \neq w_2'$ and therefore, for the edges $w_1w_1'$ and $w_2w_2'$ to be embedded without crossing, it would be necessary for the interval $[w_1,w_2]$ to contain at least $2^{d-1}+1$ vertices of $\mathcal{H}_d$, contradicting our previous calculations.

\textbf{Case 3: $w_2',w_3'$ is placed to the left of $w_2$ and $w_3$, while $w_1'$ is placed to the right of $w_1$, as shown in the right part of Figure~\ref{fig:3configurations}.} Edges $w_1w_1'$ and $w_3w_3'$ are both of length at least $2^{d-1}-2^{d-2}$, and hence, edges $e_1$ and $e_3$ are both of length at least $2^{d-1}-2^{d-3}$. Vertices $w_1,w_2,w_2',w_3$ all lie inside the interval $[w_1,w_3]$, which contains at most $2^{d-2}+1$ vertices of $\mathcal{H}_d$. Since the edge $w_2w_2'$ is of length at least $2^{d-1} - 2^{d-2} = 2^{d-2}$ and the vertices $w_2,w_2'$ lie in the interval $(w_1,w_3)$, it is necessary that this interval also contains at least $2^{d-2}+1$ vertices of $\mathcal{H}_d$, which is impossible since $(w_1,w_3) \subsetneq [w_1,w_3]$. 

According to the previous discussion, if $G$ is a graph that can be embedded in every drawing of $Q_d$ for sufficiently large $d$, then $G$ must be acyclic and cannot contain $G_0$ as a subgraph; therefore, $G$ must be a forest of caterpillars.  
\end{proof}

\section{Maximum Rectilinear Crossing number of the \texorpdfstring{$\mathbf{d}$}{d}-cube}\label{sec:maxcrossing}

We now define another drawing of $Q_d$, in a similar way to the way we defined $\mathcal{H}_d$. Let $C$ be as before and let $\mathcal{R}_2$ be a drawing of $Q_2$ defined by mapping the vertices of $Q_2$ as follows: $\overline{00} \to 0$, $\overline{01}\to\frac{1}{2}$, $\overline{11}\to \frac{1}{4}$, $\overline{10}\to \frac{3}{4}$. For $d\ge 3$, we obtain $\mathcal{R}_d$ from $\mathcal{R}_{d-1}$ in the same way we obtained $\mathcal{H}_d$ from $\mathcal{H}_{d-1}$; see Figure~\ref{fig:R_d}. As mentioned before, Alpert et al.~\cite{cuberectcrossing}~constructed a drawing weakly isomorphic to $\mathcal{R}_d$ and conjectured that this drawing achieves $CR_{\max}(Q_d)$. We prove the following more general result.

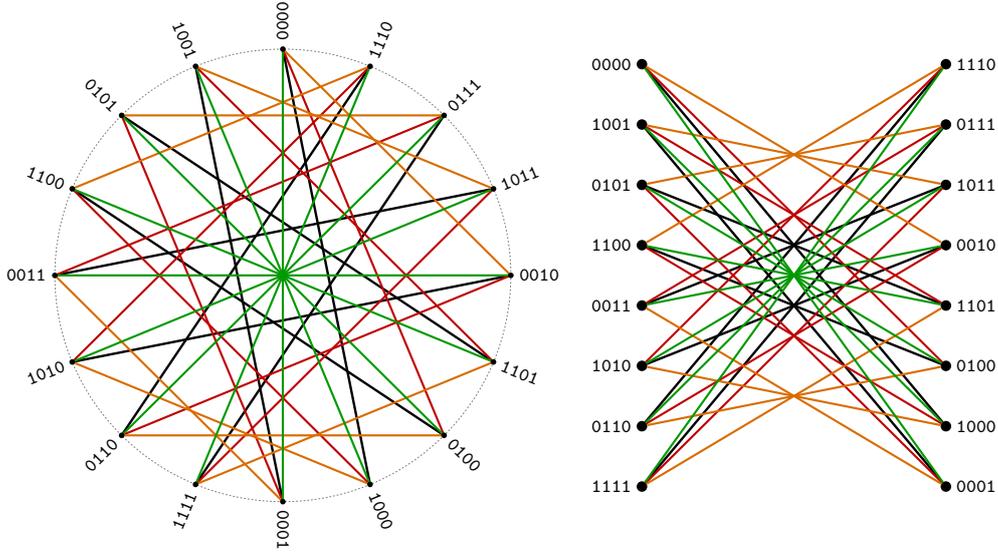
\begin{figure}
  \begin{minipage}{0.45\textwidth}
   \begin{tikzpicture}[line cap=round,line join=round,scale=0.75]

\def\R{4}      
\def\Rl{4.5}   
\def\vdot{1.4pt}

\draw[densely dotted,gray] (0,0) circle (\R);

\foreach \k/\lab in {0/0000,1/1110,2/0111,3/1011,4/0010,5/1101,6/0100,7/1000,8/0001,9/1111,10/0110,11/1010,12/0011,13/1100,14/0101,15/1001}{
  \pgfmathsetmacro{\ang}{90 - 360*\k/16}
  \coordinate (v\lab) at (\ang:\R);
  \coordinate (l\lab) at (\ang:\Rl);
}

\tikzset{
  e1/.style={black, line width=0.9pt},
  e2/.style={red!75!black, line width=0.8pt},
  e3/.style={orange!85!black, line width=0.8pt},
  e4/.style={green!60!black, line width=0.8pt},
}

\draw[e1] (v0000) -- (v1000);
\draw[e1] (v0001) -- (v1001);
\draw[e1] (v0010) -- (v1010);
\draw[e1] (v0011) -- (v1011);
\draw[e1] (v0100) -- (v1100);
\draw[e1] (v0101) -- (v1101);
\draw[e1] (v0110) -- (v1110);
\draw[e1] (v0111) -- (v1111);

\draw[e2] (v0000) -- (v0100);
\draw[e2] (v0001) -- (v0101);
\draw[e2] (v0010) -- (v0110);
\draw[e2] (v0011) -- (v0111);
\draw[e2] (v1000) -- (v1100);
\draw[e2] (v1001) -- (v1101);
\draw[e2] (v1010) -- (v1110);
\draw[e2] (v1011) -- (v1111);

\draw[e3] (v0000) -- (v0010);
\draw[e3] (v0001) -- (v0011);
\draw[e3] (v0100) -- (v0110);
\draw[e3] (v0101) -- (v0111);
\draw[e3] (v1000) -- (v1010);
\draw[e3] (v1001) -- (v1011);
\draw[e3] (v1100) -- (v1110);
\draw[e3] (v1101) -- (v1111);

\draw[e4] (v0000) -- (v0001);
\draw[e4] (v0010) -- (v0011);
\draw[e4] (v0100) -- (v0101);
\draw[e4] (v0110) -- (v0111);
\draw[e4] (v1000) -- (v1001);
\draw[e4] (v1010) -- (v1011);
\draw[e4] (v1100) -- (v1101);
\draw[e4] (v1110) -- (v1111);

\foreach \k/\lab in {0/0000,1/1101,2/0110,3/1010,4/0001,5/1111,6/0100,7/1000,8/0011,9/1110,10/0101,11/1001,12/0010,13/1100,14/0111,15/1011}{

  \path (v\lab) -- (l\lab)
    node[pos=1, sloped, font=\scriptsize\ttfamily, inner sep=1pt] {\lab};

  \fill (v\lab) circle (\vdot);
}
\end{tikzpicture}
\end{minipage}
\hfill
\begin{minipage}{0.45\textwidth}
\begin{tikzpicture}[line cap=round,line join=round, scale=0.8]


\def\vdot{1.4pt}

\tikzset{
  vertex/.style={circle, fill, inner sep=1.4pt} 
}

\foreach \k/\lab in {8/1111,9/0110,10/1010,11/0011,12/1100,13/0101,14/1001,15/0000}{
 \node[vertex] (v\lab) at (1,\k){};
 \node[font=\scriptsize\ttfamily, inner sep=1pt] (l\lab) at (0.5,\k) {\lab};
   
}
\foreach \k/\lab in {15/1110,14/0111,13/1011,12/0010,11/1101,10/0100,9/1000,8/0001}{
 \node[vertex] (v\lab) at (6,\k){};
 \node[font=\scriptsize\ttfamily, inner sep=1pt] (l\lab) at (6.5,\k) {\lab};

  \fill (v\lab) circle (\vdot);
}

\tikzset{
  e1/.style={black, line width=0.9pt},
  e2/.style={red!75!black, line width=0.8pt},
  e3/.style={orange!85!black, line width=0.8pt},
  e4/.style={green!60!black, line width=0.8pt},
}

\draw[e1] (v0000) -- (v1000);
\draw[e1] (v0001) -- (v1001);
\draw[e1] (v0010) -- (v1010);
\draw[e1] (v0011) -- (v1011);
\draw[e1] (v0100) -- (v1100);
\draw[e1] (v0101) -- (v1101);
\draw[e1] (v0110) -- (v1110);
\draw[e1] (v0111) -- (v1111);

\draw[e2] (v0000) -- (v0100);
\draw[e2] (v0001) -- (v0101);
\draw[e2] (v0010) -- (v0110);
\draw[e2] (v0011) -- (v0111);
\draw[e2] (v1000) -- (v1100);
\draw[e2] (v1001) -- (v1101);
\draw[e2] (v1010) -- (v1110);
\draw[e2] (v1011) -- (v1111);

\draw[e3] (v0000) -- (v0010);
\draw[e3] (v0001) -- (v0011);
\draw[e3] (v0100) -- (v0110);
\draw[e3] (v0101) -- (v0111);
\draw[e3] (v1000) -- (v1010);
\draw[e3] (v1001) -- (v1011);
\draw[e3] (v1100) -- (v1110);
\draw[e3] (v1101) -- (v1111);

\draw[e4] (v0000) -- (v0001);
\draw[e4] (v0010) -- (v0011);
\draw[e4] (v0100) -- (v0101);
\draw[e4] (v0110) -- (v0111);
\draw[e4] (v1000) -- (v1001);
\draw[e4] (v1010) -- (v1011);
\draw[e4] (v1100) -- (v1101);
\draw[e4] (v1110) -- (v1111);

\end{tikzpicture}
\end{minipage}
    \caption{Left: $\mathcal{R}_4$; Right: The drawing from the construction of Alpert et al. for $Q_4$.}
    \label{fig:R_d}

\end{figure}


\lengthregcrossings*

\begin{proof}
    Let, $d\ge 2$ and let $\mathcal{D}$ be a length-regular drawing of $G$ with length profile 
    $(\ell_1,\ell_2,\dots,\\\ell_d)$. Note that if $\ell_i=\ell_{i+1}$ for some $i$ (i.e., if the drawing contains all possible edges of this length), we consider half of the edges of this length as edges of length $\ell_i$ and the other half as edges of length $\ell_{i+1}$ in a way that each vertex is incident to one of each of them.
    An edge $e=vw$ of length $\ell_i$ and an edge $f=xy$ of length $\geq \ell_i$, cross precisely if one of $x,y$ lies in the interval $(v,w)$ and the other lies in the interval $(w,v)$.
    As the drawing is length-regular, for each of the $\ell_i-1$ vertices between $v$ and $w$ there is an edge for each length, in particular for each $\ell_j$ with $j<i$. 
    Thus, we count in total $(\ell_i-1)(i-1)$ crossings of $e$ with edges of length $\ell_1,\ldots,\ell_{i-1}$ and $(\ell_i-1)$ crossings of $e$ with edges of length $\ell_i$.
    Adding the crossings of all $\frac{|V(G)|}{2}$ edges of length $\ell_i$ with edges of length $\geq\ell_i$, we obtain $\frac{|V(G)|}{2}(\ell_i-1)(i-\frac{1}{2})$ crossings, where the $i-\frac{1}{2}$ appears since crossings between two edges of length $\ell_i$ are counted twice. Summing over all lengths, we see that the total number of crossings in $\mathcal{D}$ is $\frac{|V(G)|}{2}\sum_{i=1}^d(\ell_i-1)(i-\frac{1}{2})$.
\end{proof}

\begin{corollary}
    As the drawing $\mathcal{R}_d$ is length-regular with lengths $\ell_1=2^{d-1}$ and $\ell_i=2^{d-1}-2^{i-2}$ for $2\le i\le d$, 
    the maximum rectilinear crossing number CR$_{\max}(Q_d)$ is at least~$2^{d-2}(2^{d-1}(d^2-2d+3)-d^2-1)$. 
\end{corollary}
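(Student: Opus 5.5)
The plan is to apply Theorem~\ref{thm:lengthregcrossings} to $\mathcal{R}_d$ with $G=Q_d$. This gives $\mathrm{CR}_{\max}(Q_d)\ge$ the number of crossings of $\mathcal{R}_d$, which equals $2^{d-1}\sum_{i=1}^d(\ell_i-1)(i-\frac12)$. Two things then need checking: that $\mathcal{R}_d$ is length-regular with the stated profile, and that the sum simplifies to the claimed expression.

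\textbf{Length-regularity, by induction on $d$.} For $d=2$, the placement $\overline{00}\to0$, $\overline{01}\to\frac12$, $\overline{11}\to\frac14$, $\overline{10}\to\frac34$ gives every vertex one edge of length $2$ and one edge of length $1$. So the profile is $(2,1)=(2^{1},2^{1}-2^{0})$.

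For the inductive step I would use the interleaving viewpoint behind $\mathcal{H}'_d$. The two copies $\mathcal{R}',\mathcal{R}''$ of $\mathcal{R}_{d-1}$ alternate around $C$: rotating by $\frac12-\frac1{2^d}$ places each vertex of $\mathcal{R}''$ exactly halfway between two consecutive vertices of $\mathcal{R}'$. Hence an edge of combinatorial length $\ell$ inside a copy has length $2\ell$ in $\mathcal{R}_d$. The new matching edges join a vertex to its copy rotated by $\frac12-\frac1{2^d}$, which is $2^{d-1}-1$ steps among $2^d$ points. Thus the old profile $(2^{d-2},\,2^{d-2}-2^{i-2})_{i\ge2}$ becomes $2^{d-1}$ together with $2^{d-1}-2^{i-1}$ for $2\le i\le d-1$, plus the new length $2^{d-1}-1$. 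After sorting, this is exactly $\ell_1=2^{d-1}$ and $\ell_i=2^{d-1}-2^{i-2}$ for $2\le i\le d$. As a sanity check, $\mathcal{R}_4$ in Figure~\ref{fig:R_d} has profile $(8,7,6,4)$. The one point needing care is that rotation and doubling preserve the per-vertex profile, which they do uniformly, so the drawing stays length-regular.

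\textbf{Evaluating the sum.} The $i=1$ term is $(2^{d-1}-1)\cdot\frac12$, and for $i\ge2$ we use $\ell_i-1=2^{d-1}-1-2^{i-2}$. Then
\[
\sum_{i=1}^d(\ell_i-1)\Bigl(i-\tfrac12\Bigr)=(2^{d-1}-1)\sum_{i=1}^d\Bigl(i-\tfrac12\Bigr)-\sum_{i=2}^d 2^{i-2}\Bigl(i-\tfrac12\Bigr),
\]
where the first sum is $\frac{d^2}{2}$. For the second sum I would use $\sum_{i=2}^d i2^{i-2}=(d-1)2^{d-1}$ and $\sum_{i=2}^d 2^{i-2}=2^{d-1}-1$, which give $(d-1)2^{d-1}-\frac12(2^{d-1}-1)$. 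Multiplying by $2^{d-1}$ and collecting terms should yield $2^{d-2}\bigl(2^{d-1}(d^2-2d+3)-d^2-1\bigr)$.

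Concretely, twice the bracket equals $2^{d-1}d^2-d^2-2(d-1)2^{d-1}+2^{d-1}-1=2^{d-1}(d^2-2d+3)-d^2-1$, so the identity holds. Checking $d=2$ against the picture gives $1$ crossing in both, as it should. The only real obstacle is the bookkeeping in the length-regularity induction, namely confirming that interleaving exactly doubles old lengths; the algebra is routine.
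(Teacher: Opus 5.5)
Your proposal is correct and takes the paper's route: the paper plugs the stated length profile of $\mathcal{R}_d$ into Theorem~\ref{thm:lengthregcrossings} and leaves the evaluation as a ``simple calculation''. The paper asserts length-regularity without proof. Your inductive check of it (interleaving doubles the old lengths, and the new edges have length $2^{d-1}-1$) and your evaluation of the sum supply exactly the omitted details, and both are right.
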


We note that Theorem~\ref{thm:lengthregcrossings} allows us to recover simpler proofs of other results about maximum rectilinear crossing numbers. For example, \emph{star-like} drawings of $d$-regular graphs presented in~\cite{kregrectcross} are also length-regular. These drawings are known to maximize the maximum rectilinear crossing number taken over all $d$-regular graphs with a given number of vertices.  

\section{Plane paths in rectilinear and simple drawings of the $d$-cube}\label{sec:nonconvexdrawings}

\fourpathrectilinear*

\begin{proof}
Consider an arbitrary rectilinear drawing $\mathcal{D}$ of $Q_3$. We consider two cases. \\
\textbf{Case 1: $\mathcal{D}$ contains a plane $4$-cycle.} Let the vertices of this plane $4$-cycle be $a,b,c,d$, we need to consider two subcases, depending on whether $a,b,c,d$ form a convex or non-convex $4$-gon. \\
\textit{Subcase 1.1: $a,b,c,d$ form a convex $4$-gon.} Let $a',b',c',d'$ be the remaining vertices of $Q_3$ so that $xx'$ is an edge for every $x\in \{a,b,c,d\}$ . For each vertex $x\in \{a,b,c,d\}$ there are exactly two plane paths $p_x,q_x$ in the $4$-gon $\{a,b,c,d\}$ that have $x$ as an endpoint. For example $p_a= d,c,b,a$ and $q_a=b,c,d,a$. Now consider the paths $p_a'=d,c,b,a,a'$ and $q_a'=b,c,d,a,a'$. If any of these paths is plane, we are done, so we can assume that the edge $aa'$ crosses one of the edges $bc$ or $cd$. Therefore, $a'$ must lie in the section of the plane defined by the angle $\angle dab$ and outside of the convex hull of $\{a,b,c,d\}$, see Figure~\ref{fig:4pathgeometriccase11}, we call this a \emph{plausible region} for $a'$. Similarly, we can conclude the plausible regions for $b',c',d'$. Now, if the edge $a'b'$ does not cross the edge $cb$, we have a plane path of length $4$ given by $b',a',a,b,c$. Otherwise, if $b'a'$ crosses $bc$, note that it is impossible for $a'd'$ to cross $dc$ and therefore $d',a',a,d,c$ is a plane path of length $4$ in $\mathcal{D}$. 

\begin{figure}[ht]
    \centering
    \includegraphics[page=1]{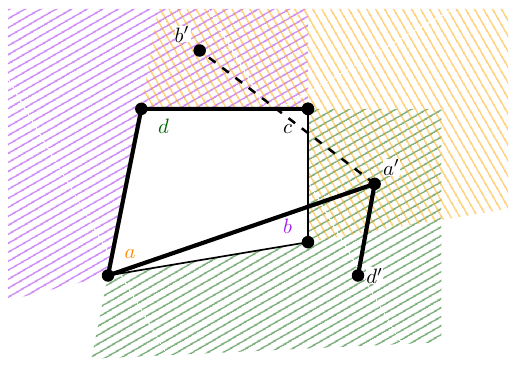}
    \caption{Subcase 1.1 in the proof of Proposition~\ref{prop:4pathrectilinear}. The regions in which vertices $a',c',d'$ can be placed are shaded, plane path $d',a',a,d,c$ is drawn in bold.}
    \label{fig:4pathgeometriccase11}
\end{figure}

\textit{Subcase 1.2: $a,b,c,d$ form a non-convex $4$-gon.} Assume that $a,b,c,d$ are placed as in Figure~\ref{fig:4pathgeometriccase12}, i.e., that $c$ is inside the triangle $\triangle abd$. Let $a',b',c',d'$ and $p_x,q_x$ be defined as in the previous subcase. By the same arguments as before, we can find plausible regions in which $a',b',d'$ can lie, which can be seen in Figure~\ref{fig:4pathgeometriccase12}. Again, if the edge $b'a'$ does not cross the edge $bc$, the path $b',a',a,b,c$ is plane and we are done. Otherwise, it is impossible for $a'b'$ to cross $bc$ and $a'd'$ to cross $dc$ so the path $d',a',a,d,c$ is a plane path of length $4$ in $\mathcal{D}$. 

\begin{figure}[ht]
    \centering
    \includegraphics[page=2]{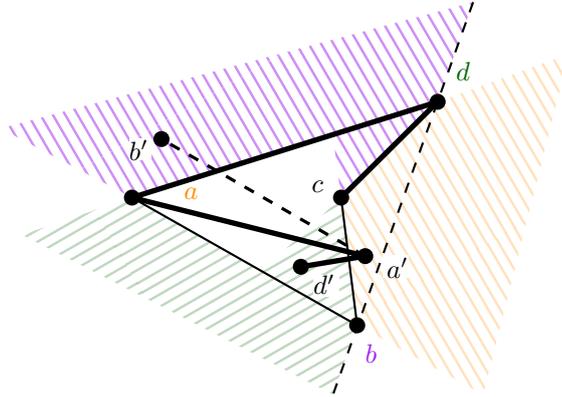}
    \caption{Subcase 1.2 in the proof of Proposition~\ref{prop:4pathrectilinear}.}
    \label{fig:4pathgeometriccase12}
\end{figure}

\textbf{Case 2: There is no plane $4$-cycle in $D$.} Let $a,b,c,d$ be an arbitrary $4$-cycle in $D$ and assume that the edges $ab$ and $cd$ cross. Let $a',b',c',d'$ be defined as in the previous cases. Additionally, since $a,b,c,d$ is not plane, for each $x\in \{a,b,c,d\}$ there is a unique plane path of length $3$ with $x$ as an endpoint in the $4$-cycle $a,b,c,d$, denote this path $p_x$. Using similar arguments as in previous cases, we can determine plausible regions for $a',b',c',d'$, which can be seen in Figure~\ref{fig:4pathgeometriccase2regions}.  Furthermore, if the edge $c'd'$ does not cross both of the edges $ad,bc$, then either $a,d,d',c',c$ or $d,d',c',c,b$ is a plane path of length $4$, so we assume that $c'd'$ crosses both of these edges, similarly, we assume that $a'b'$ crosses both of these edges.
Consider the $4$-cycle $c',d',d,c$. We know that this $4$-cycle is not 
plane, so either $c'd'$ crosses $cd$ or $c'c$ crosses $d'd$. 

\begin{figure}[ht]
    \centering
    \includegraphics[page=3]{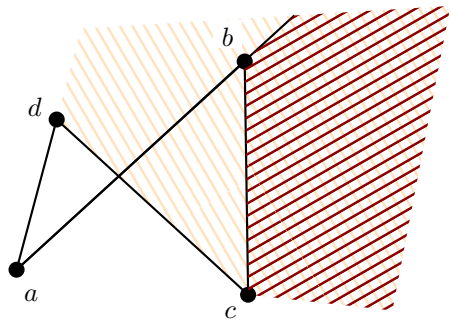}
    \caption{Setup of case 2 in the proof of Proposition~\ref{prop:4pathrectilinear}. Before considering the fact that $a'b'$ must cross both edges $ad$ and $bc$, we can conclude that $a'$ can be placed into the orange region. With additional information, $a'$ can only be placed in the dark red shaded region.}
    \label{fig:4pathgeometriccase2regions}
\end{figure}

\textit{Subcase 2.1: $c'c$ crosses $d'd$.} In this case, $c'd'$ does not cross either $ad$ or $bc$, which contradicts our assumptions.\\
\textit{Subcase 2.2: $c'd'$ crosses $cd$.}  Now, $a$ and $b$ lie in different halfplanes determined by line $\ell$ through $c',d'$. If $a'$ and $b'$ are both in the same halfplane determined by $\ell$, then the $4$-cycle $a',b',c',d'$ is plane, contradicting the assumption of this case. So, we can assume that $a'$ lies in the same halfplane determined by $\ell$ as $b$ and $b'$ lies in the same halfplane determined by $\ell$ as $a$. Recall that we assume that $a'b'$ crosses both $ad$ and $bc$. Therefore, we are in the situation as in Figure~\ref{fig:4pathgeometriccase22}, so the path $c',b',b,a,a'$ is a plane path of length $4$. \qedhere
 
\begin{figure}[ht]
    \centering
    \includegraphics[page=4]{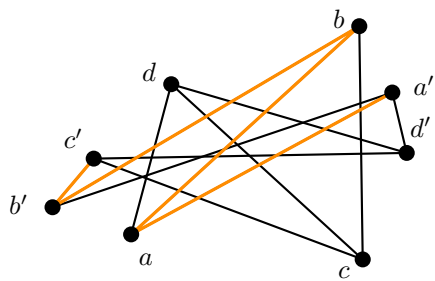}
    \caption{Subcase 2.2 in the proof of Proposition~\ref{prop:4pathrectilinear}, the plane path of length $4$ $c',b',b,a,a'$ is drawn in orange.}
    \label{fig:4pathgeometriccase22}
\end{figure}

\end{proof}

Finally, we construct a simple drawing of $Q_3$ with no plane path of length $4$. Our construction can be seen in Figure~\ref{fig:Q3} below. 

\begin{figure}
    \centering
   \begin{tikzpicture}
    \node[anchor=south west, inner sep=0] (img) at (0,0)
        {\includegraphics[width=7cm]{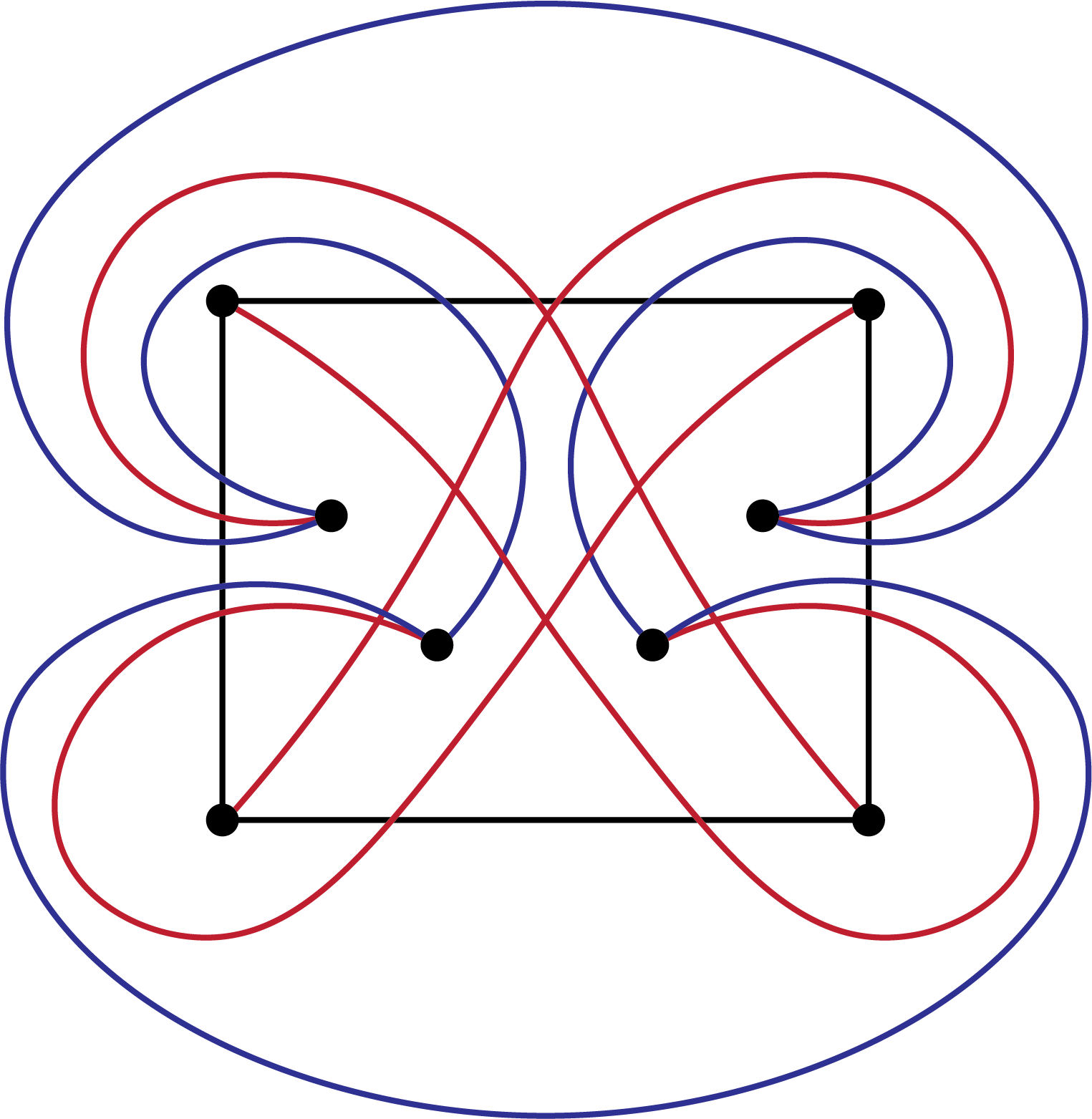}};
    \begin{scope}[x={(img.south east)}, y={(img.north west)}]

        \node at (0.17,0.7) {\Large $d$};      
        \node at (0.83,0.25) {\Large $b$};       
        \node at (0.17,0.25) {\Large $a$};       
        \node at (0.83,0.7) {\Large $c$};        
         \node at (0.63,0.4) {\Large $d'$};         
        \node at (0.67,0.58) {\Large $a'$};         
        \node at (0.33,0.58) {\Large $b'$};         
        \node at (0.36,0.4) {\Large $c'$};       
    \end{scope}
    
\end{tikzpicture}
    \caption{Simple drawing of $Q_3$ with no plane path of length $4$.}
    \label{fig:Q3}
\end{figure}

\QthreeNoPlanePath* 

\begin{proof}
    Let $\mathcal{D}$ be the drawing in Figure~\ref{fig:Q3}, we refer to edges by their color (red/blue/black) and to vertices by their label in the figure. We will show that $\mathcal{D}$ does not contain a plane path of length $4$.  Suppose for contradiction that there is a plane path $p$ of length $4$ in $\mathcal{D}$. Observe that every red edge in $\mathcal{D}$ crosses every other red edge and therefore $p$ can contain only one red edge. We now differentiate between $4$ cases based on the order of edges within~$p$. \\
    \textbf{Case 1: $p$ consists of three consecutive blue edges and one red edge.} Observe that each red edge crosses the blue cycle at least once. For example, the edge $bb'$ crosses the edge $a'd'$. Now, if $p$ contains the edge $bb'$, by the assumptions of the case we are in, we can see that $p=a,d',c',b',b$ and is hence not plane. Similar analysis holds for the other three red edges. \\
    \textbf{Case 2: $p$ consists of two consecutive blue edges, one red edge and one black edge.} Assume that $aa'$ is the single red edge in $p$. Since it is preceded by two blue edges, we know that $p=c',d',a',a,x$ where $x\in \{b,d\}$. However, both the edges $ab$ and $ad$ cross one of the edges $c'd'$ or $d'a'$, hence $p$ is not plane. Similar analysis holds for the other red edges. \\
    \textbf{Case 3: $p$ consists of one blue edge, one red edge and two consecutive black edges.} Note that every red edge crosses exactly two black edges, so the only black edges not crossed by the red edges share a vertex with it. Therefore, it is impossible for a red edge to be preceded by more than one black edge in any plane path. \\
    \textbf{Case 4: $p$ consists of one red edge and three consecutive black edges.} Same as Case 3. 
\end{proof}

\section{Discussion and Open Problems}

To the best of our knowledge, we were the first to investigate existence of plane substructures in drawings of $Q_d$. While we gave partial or full answer to some questions, many still remain open. We highlight the ones that we consider the most interesting here. 

Firstly, our study was mostly concerned with convex-geometric drawings of $Q_d$, and we did not investigate general rectilinear or simple drawings of $Q_d$ in much detail. Investigating the problems in these settings may pose some interesting challenges. We attempted to find a counterpart to Theorem~\ref{cor_longplanepaths} for general rectilinear drawings. However, the bound that we obtain in Proposition~\ref{prop:4pathrectilinear} is considerably worse than that obtained from Theorem~\ref{cor_longplanepaths} for convex-geometric drawings. 
Note that the upper bound on the length of the longest plane path from Theorem~\ref{thm:drawingswithnolongpaths} is still valid for general rectilinear drawings, so the gap between the lower and upper bounds is very large. It would be interesting to improve any of these bounds. 

For simple drawings, we know even less. We constructed a drawing of $Q_3$ that does not contain a plane path of length more than $3$. Note that a path of length $3$ can always be found already in $Q_2$. It is possible (but it would be surprising) that our construction can be extended to higher dimensions. This would be particularly interesting since determining the maximum number of edges that a simple drawing can have without containing a plane path of given length was investigated recently by Keszegh et al.~\cite{keszegh}, so such constructions would be interesting from this perspective as well.  

Another intriguing problem would be to close the gap between the bounds of Theorems~\ref{cor_longplanepaths} and~\ref{thm-planesubgraphrotateddrawing}. Currently, we know that every convex-geometric drawing of $Q_d$ contains a plane path of length at least $d-1$ and that some drawings ($\mathcal{H}_d$) do not contain plane paths of length larger than $2d-3$. We are not sure which (if any) of these bounds is tight. 

Lastly, we mention the conjecture of Alpert et al.~\cite{cuberectcrossing}, which says that the maximum crossing number CR$_{\max}$ of $Q_d$ is obtained by the drawing $\mathcal{R}_d$. Since this conjecture seems to be hard to solve, we think that verifying it for the more restricted classes of convex-geometric or even length-regular convex-geometric drawings of $Q_d$ might be more approachable. A more ambitious direction might be to consider the maximum crossing number of $Q_d$ for general simple drawings. 

\bibstyle{plainurl}

\bibliography{bibliography}

@inproceedings {cuberectcrossing,
    AUTHOR = {Alpert, Matthew and Feder, Elie and Harborth, Heiko and Klein,
              Sheldon},
     TITLE = {The maximum rectilinear crossing number of the {$n$}
              dimensional cube graph},
 BOOKTITLE = {Proceedings of the {F}ortieth {S}outheastern {I}nternational
              {C}onference on {C}ombinatorics, {G}raph {T}heory and
              {C}omputing},
   JOURNAL = {Congr. Numer.},
  FJOURNAL = {Congressus Numerantium. A Conference Journal on Numerical
              Themes},
    VOLUME = {195},
      YEAR = {2009},
     PAGES = {147--158},
      ISSN = {0384-9864},
   MRCLASS = {05C10},
  MRNUMBER = {2584292},
}

@article {completerectcrossing,
    AUTHOR = {Harborth, Heiko and Th\"{u}rmann, Christian},
     TITLE = {Maximum rectilinear crossing number in drawings of the
              complete graph with a given convex hull},
   JOURNAL = {Congr. Numer.},
  FJOURNAL = {Congressus Numerantium. A Conference Journal on Numerical
              Themes},
    VOLUME = {221},
      YEAR = {2014},
     PAGES = {121--127},
      ISSN = {0384-9864},
   MRCLASS = {05C35 (05C62)},
  MRNUMBER = {3329011},
}

@article {cyclesrectcross,
    AUTHOR = {Bode, Jens-P. and Feder, Elie and Harborth, Heiko and
              Horowitz, David and Lichter, Tamar},
     TITLE = {Extremal values of the maximum rectilinear crossing number of
              cycles with diagonals},
   JOURNAL = {Congr. Numer.},
  FJOURNAL = {Congressus Numerantium. A Conference Journal on Numerical
              Themes},
    VOLUME = {220},
      YEAR = {2014},
     PAGES = {33--48},
      ISSN = {0384-9864},
   MRCLASS = {05C35 (05C62)},
  MRNUMBER = {3308551},
}

@article{kregrectcross,
  title={The Maximum of the Maximum Rectilinear Crossing Numbers of $d$-Regular Graphs of Order $n$},
  author={Matthew Alpert and Elie Feder and Heiko Harborth},
  journal={Electron. J. Comb.},
  year={2008},
  volume={16},
  url={https://api.semanticscholar.org/CorpusID:9548696}
}

@article{Aichholzer2024,
  title = {Twisted Ways to Find Plane Structures in Simple Drawings of Complete Graphs},
  volume = {71},
  ISSN = {1432-0444},
  DOI = {10.1007/s00454-023-00610-0},
  number = {1},
  journal = {Discrete \& Computational Geometry},
  publisher = {Springer Science and Business Media LLC},
  author = {Aichholzer,  Oswin and García,  Alfredo and Tejel,  Javier and Vogtenhuber,  Birgit and Weinberger,  Alexandra},
  year = {2024},
  month = jan,
  pages = {40–66}
}

@inproceedings{perles,
    author = {Moser, W. and Pach, J.},
    title = {Geometric graphs},
    booktitle = {New Trends in Discrete and
Computational Geometry (J. Pach, ed.)},
    publisher = {Springer, New York},
    year = {1993}
}

@inproceedings{Aichholzer2023,
  doi = {10.4230/LIPICS.SOCG.2023.6},
  url = {https://drops.dagstuhl.de/entities/document/10.4230/LIPIcs.SoCG.2023.6},
  author = {Aichholzer,  Oswin and Chiu,  Man-Kwun and Hoang,  Hung P. and Hoffmann,  Michael and Kynčl,  Jan and Maus,  Yannic and Vogtenhuber,  Birgit and Weinberger,  Alexandra},
  language = {en},
  title = {Drawings of Complete Multipartite Graphs up to Triangle Flips},
  journal = {LIPIcs,  Volume 258,  SoCG 2023},
  volume = {258},
  pages = {6:1--6:16},
  publisher = {Schloss Dagstuhl – Leibniz-Zentrum f\"{u}r Informatik},
  year = {2023},
  copyright = {Creative Commons Attribution 4.0 International license}
}

@article{CardinalFelsner2018,
  doi = {10.20382/JOCG.V9I1A7},
  url = {https://jocg.org/index.php/jocg/article/view/3053},
  author = {Cardinal,  Jean and Felsner,  Stefan},
  language = {en},
  title = {Topological drawings of complete bipartite graphs},
  journal = {Journal of Computational Geometry},
  pages = {Vol. 9 No. 1 (2018)},
  publisher = {Journal of Computational Geometry},
  year = {2018}
}

@article{deKlerk2014,
  title = {Book drawings of complete bipartite graphs},
  volume = {167},
  ISSN = {0166-218X},
  url = {http://dx.doi.org/10.1016/j.dam.2013.11.001},
  DOI = {10.1016/j.dam.2013.11.001},
  journal = {Discrete Applied Mathematics},
  publisher = {Elsevier BV},
  author = {de Klerk,  Etienne and Pasechnik,  Dmitrii V. and Salazar,  Gelasio},
  year = {2014},
  month = apr,
  pages = {80–93}
}

@inbook{Lenhart2023,
  title = {Mutual Witness Gabriel Drawings of Complete Bipartite Graphs},
  ISBN = {9783031222030},
  ISSN = {1611-3349},
  url = {http://dx.doi.org/10.1007/978-3-031-22203-0_3},
  DOI = {10.1007/978-3-031-22203-0_3},
  booktitle = {Graph Drawing and Network Visualization},
  publisher = {Springer International Publishing},
  author = {Lenhart,  William J. and Liotta,  Giuseppe},
  year = {2023},
  pages = {25–39}
}

@article{Kynl2009,
  title = {Enumeration of simple complete topological graphs},
  volume = {30},
  ISSN = {0195-6698},
  url = {http://dx.doi.org/10.1016/j.ejc.2009.03.005},
  DOI = {10.1016/j.ejc.2009.03.005},
  number = {7},
  journal = {European Journal of Combinatorics},
  publisher = {Elsevier BV},
  author = {Kynčl,  Jan},
  year = {2009},
  month = oct,
  pages = {1676–1685}
}

@article{Arroyo2021,
  title = {Drawings of complete graphs in the projective plane},
  volume = {97},
  ISSN = {1097-0118},
  url = {http://dx.doi.org/10.1002/jgt.22665},
  DOI = {10.1002/jgt.22665},
  number = {3},
  journal = {Journal of Graph Theory},
  publisher = {Wiley},
  author = {Arroyo,  Alan and McQuillan,  Dan and Richter,  R. Bruce and Salazar,  Gelasio and Sullivan,  Matthew},
  year = {2021},
  month = mar,
  pages = {426–440}
}

@article{Balko2025,
  title = {Faces in rectilinear drawings of complete graphs},
  volume = {130},
  ISSN = {0195-6698},
  url = {http://dx.doi.org/10.1016/j.ejc.2025.104217},
  DOI = {10.1016/j.ejc.2025.104217},
  journal = {European Journal of Combinatorics},
  publisher = {Elsevier BV},
  author = {Balko,  Martin and Br\"{o}tzner,  Anna and Klute,  Fabian and Tkadlec,  Josef},
  year = {2025},
  month = dec,
  pages = {104217}
}

@article{Arroyo2021a,
  title = {Extending Drawings of Complete Graphs into Arrangements of Pseudocircles},
  volume = {35},
  ISSN = {1095-7146},
  url = {http://dx.doi.org/10.1137/20M1313234},
  DOI = {10.1137/20m1313234},
  number = {2},
  journal = {SIAM Journal on Discrete Mathematics},
  publisher = {Society for Industrial \& Applied Mathematics (SIAM)},
  author = {Arroyo,  Alan and Richter,  R. Bruce and Sunohara,  Matthew},
  year = {2021},
  month = jan,
  pages = {1050–1076}
}

@article{Gioan2022,
  title = {Complete Graph Drawings up to Triangle Mutations},
  volume = {67},
  ISSN = {1432-0444},
  url = {http://dx.doi.org/10.1007/s00454-021-00339-8},
  DOI = {10.1007/s00454-021-00339-8},
  number = {4},
  journal = {Discrete \& Computational Geometry},
  publisher = {Springer Science and Business Media LLC},
  author = {Gioan,  Emeric},
  year = {2022},
  month = mar,
  pages = {985–1022}
}

@article{brego2013,
  title = {The 2-Page Crossing Number of ${K}_{n}$ },
  volume = {49},
  ISSN = {1432-0444},
  url = {http://dx.doi.org/10.1007/s00454-013-9514-0},
  DOI = {10.1007/s00454-013-9514-0},
  number = {4},
  journal = {Discrete \& Computational Geometry},
  publisher = {Springer Science and Business Media LLC},
  author = {Ábrego,  Bernardo M. and Aichholzer,  Oswin and Fernández-Merchant,  Silvia and Ramos,  Pedro and Salazar,  Gelasio},
  year = {2013},
  month = jun,
  pages = {747–777}
}

@article {MR1775301,
    AUTHOR = {Faria, Luerbio and de Figueiredo, Celina Miraglia Herrera},
     TITLE = {On {E}ggleton and {G}uy's conjectured upper bound for the
              crossing number of the {$n$}-cube},
   JOURNAL = {Math. Slovaca},
  FJOURNAL = {Mathematica Slovaca},
    VOLUME = {50},
      YEAR = {2000},
    NUMBER = {3},
     PAGES = {271--287},
      ISSN = {0139-9918,1337-2211},
   MRCLASS = {05C10 (57M15)},
  MRNUMBER = {1775301},
MRREVIEWER = {L.\ Yu.\ Glebski\u{\i}},
}

@article{Kainen1972,
  title = {A lower bound for crossing numbers of graphs with applications to ${K}_n$, ${K}_{p,q}$, and ${Q}(d)$},
  volume = {12},
  ISSN = {0095-8956},
  url = {http://dx.doi.org/10.1016/0095-8956(72)90042-1},
  DOI = {10.1016/0095-8956(72)90042-1},
  number = {3},
  journal = {Journal of Combinatorial Theory,  Series B},
  publisher = {Elsevier BV},
  author = {Kainen,  Paul C},
  year = {1972},
  month = jun,
  pages = {287–298}
}

@inbook{Skora1992,
  title = {On the crossing number of the hypercube and the cube connected cycles},
  ISBN = {9783540467359},
  ISSN = {1611-3349},
  url = {http://dx.doi.org/10.1007/3-540-55121-2_21},
  DOI = {10.1007/3-540-55121-2_21},
  booktitle = {Graph-Theoretic Concepts in Computer Science},
  publisher = {Springer Berlin Heidelberg},
  author = {Sýkora,  Ondrej and Vrťo,  Imrich},
  year = {1992},
  pages = {214–218}
}

@inbook{Kindermann2023,
  title = {Three Edge-Disjoint Plane Spanning Paths in a Point Set},
  ISBN = {9783031492723},
  ISSN = {1611-3349},
  url = {http://dx.doi.org/10.1007/978-3-031-49272-3_22},
  DOI = {10.1007/978-3-031-49272-3_22},
  booktitle = {Graph Drawing and Network Visualization},
  publisher = {Springer Nature Switzerland},
  author = {Kindermann,  P. and Kratochvíl,  J. and Liotta,  G. and Valtr,  P.},
  year = {2023},
  pages = {323–338}
}

@misc{arxiv,
  doi = {10.48550/ARXIV.2511.22526},
  url = {https://arxiv.org/abs/2511.22526},
  author = {Antić,  Todor and Džuklevski,  Aleksa and Fiala,  Jiří and Kratochvíl,  Jan and Liotta,  Giuseppe and Saghafian,  Morteza and Saumell,  Maria and Zink,  Johannes},
  title = {Edge-Constrained Hamiltonian Paths on a Point Set},
  publisher = {arXiv},
  year = {2025},
  copyright = {arXiv.org perpetual,  non-exclusive license}
}

@article{Kynl2008,
  title = {Long alternating paths in bicolored point sets},
  volume = {308},
  ISSN = {0012-365X},
  url = {http://dx.doi.org/10.1016/j.disc.2007.08.013},
  DOI = {10.1016/j.disc.2007.08.013},
  number = {19},
  journal = {Discrete Mathematics},
  publisher = {Elsevier BV},
  author = {Kynčl,  Jan and Pach,  János and Tóth,  Géza},
  year = {2008},
  month = oct,
  pages = {4315–4321}
}

@inproceedings{mulzer2020,
  doi = {10.4230/LIPICS.SOCG.2020.57},
  url = {https://drops.dagstuhl.de/entities/document/10.4230/LIPIcs.SoCG.2020.57},
  author = {Mulzer,  Wolfgang and Valtr,  Pavel},
  language = {en},
  title = {Long Alternating Paths Exist},
  journal = {LIPIcs,  Volume 164,  SoCG 2020},
  volume = {164},
  pages = {57:1--57:16},
  publisher = {Schloss Dagstuhl – Leibniz-Zentrum f\"{u}r Informatik},
  year = {2020},
  copyright = {Creative Commons Attribution 3.0 Unported license}
}

@misc{soukup2024,
  doi = {10.48550/ARXIV.2404.06105},
  url = {https://arxiv.org/abs/2404.06105},
  author = {Soukup,  Jan},
  title = {Bicolored point sets admitting non-crossing alternating Hamiltonian paths},
  publisher = {arXiv},
  year = {2024},
  copyright = {arXiv.org perpetual,  non-exclusive license}
}

@inbook{Cibulka2009,
  title = {Hamiltonian Alternating Paths on Bicolored Double-Chains},
  ISBN = {9783642002199},
  ISSN = {1611-3349},
  url = {http://dx.doi.org/10.1007/978-3-642-00219-9_18},
  DOI = {10.1007/978-3-642-00219-9_18},
  booktitle = {Graph Drawing},
  publisher = {Springer Berlin Heidelberg},
  author = {Cibulka,  Josef and Kynčl,  Jan and Mészáros,  Viola and Stolař,  Rudolf and Valtr,  Pavel},
  year = {2009},
  pages = {181–192}
}

@article{Cska2022,
  title = {Upper bounds for the necklace folding problems},
  volume = {157},
  ISSN = {0095-8956},
  url = {http://dx.doi.org/10.1016/j.jctb.2022.05.012},
  DOI = {10.1016/j.jctb.2022.05.012},
  journal = {Journal of Combinatorial Theory,  Series B},
  publisher = {Elsevier BV},
  author = {Csóka,  Endre and Blázsik,  Zoltán L. and Király,  Zoltán and Lenger,  Dániel},
  year = {2022},
  month = nov,
  pages = {123–143}
}

@misc{Ricci2025,
  doi = {10.48550/ARXIV.2506.20421},
  url = {https://arxiv.org/abs/2506.20421},
  author = {Ricci,  Marco and Rollin,  Jonathan and Schulz,  André and Weinberger,  Alexandra},
  title = {On plane cycles in geometric multipartite graphs},
  publisher = {arXiv},
  year = {2025},
  copyright = {Creative Commons Attribution 4.0 International}
}

@inbook{Suk2023,
  title = {Unavoidable Patterns in Complete Simple Topological Graphs},
  ISBN = {9783031222030},
  ISSN = {1611-3349},
  url = {http://dx.doi.org/10.1007/978-3-031-22203-0_1},
  DOI = {10.1007/978-3-031-22203-0_1},
  booktitle = {Graph Drawing and Network Visualization},
  publisher = {Springer International Publishing},
  author = {Suk,  Andrew and Zeng,  Ji},
  year = {2023},
  pages = {3–15}
}

@inbook{Pach2005,
  title = {Disjoint Edges in Topological Graphs},
  ISBN = {9783540305408},
  ISSN = {1611-3349},
  url = {http://dx.doi.org/10.1007/978-3-540-30540-8_15},
  DOI = {10.1007/978-3-540-30540-8_15},
  booktitle = {Combinatorial Geometry and Graph Theory},
  publisher = {Springer Berlin Heidelberg},
  author = {Pach,  János and Tóth,  Géza},
  year = {2005},
  pages = {133–140}
}

@article{GarcaOlaverri2021,
  title = {On plane subgraphs of complete topological drawings},
  volume = {20},
  ISSN = {1855-3966},
  url = {http://dx.doi.org/10.26493/1855-3974.2226.e93},
  DOI = {10.26493/1855-3974.2226.e93},
  number = {1},
  journal = {Ars Mathematica Contemporanea},
  publisher = {University of Primorska Press},
  author = {García Olaverri,  Alfredo and Tejel Altarriba,  Javier and Pilz,  Alexander},
  year = {2021},
  month = aug,
  pages = {69–87}
}

@article{Bergold2025,
  title = {Plane Hamiltonian Cycles in Convex Drawings},
  ISSN = {1432-0444},
  url = {http://dx.doi.org/10.1007/s00454-025-00752-3},
  DOI = {10.1007/s00454-025-00752-3},
  journal = {Discrete \& Computational Geometry},
  publisher = {Springer Science and Business Media LLC},
  author = {Bergold,  Helena and Felsner,  Stefan and M. Reddy,  Meghana and Orthaber,  Joachim and Scheucher,  Manfred},
  year = {2025},
  month = jul 
}

@article{Fulek2014,
  title = {Estimating the Number of Disjoint Edges in Simple Topological Graphs via Cylindrical Drawings},
  volume = {28},
  ISSN = {1095-7146},
  url = {http://dx.doi.org/10.1137/130925554},
  DOI = {10.1137/130925554},
  number = {1},
  journal = {SIAM Journal on Discrete Mathematics},
  publisher = {Society for Industrial \& Applied Mathematics (SIAM)},
  author = {Fulek,  Radoslav},
  year = {2014},
  month = jan,
  pages = {116–121}
}

@inbook{Garca2023,
  title = {Empty Triangles in Generalized Twisted Drawings of $K_n$},
  ISBN = {9783031222030},
  ISSN = {1611-3349},
  url = {http://dx.doi.org/10.1007/978-3-031-22203-0_4},
  DOI = {10.1007/978-3-031-22203-0_4},
  booktitle = {Graph Drawing and Network Visualization},
  publisher = {Springer International Publishing},
  author = {García,  Alfredo and Tejel,  Javier and Vogtenhuber,  Birgit and Weinberger,  Alexandra},
  year = {2023},
  pages = {40–48}
}

@misc{keszegh,
  doi = {10.48550/ARXIV.2512.04795},
  url = {https://arxiv.org/abs/2512.04795},
  author = {Keszegh,  Balázs and Suk,  Andrew and Tardos,  Gábor and Zeng,  Ji},
  title = {Unavoidable patterns and plane paths in dense topological graphs},
  publisher = {arXiv},
  year = {2025},
  copyright = {arXiv.org perpetual,  non-exclusive license}
}

@phdthesis{Alsolami2012Auth,
    title    = {The good drawings $D_n$ of the complete graph $K_n$},
    school   = {McGill University},
    author   = {Rafla, Nabil H.},
    year     = {1988}
}

@article{Faria2008,
  title = {An improved upper bound on the crossing number of the hypercube},
  volume = {59},
  ISSN = {1097-0118},
  url = {http://dx.doi.org/10.1002/jgt.20330},
  DOI = {10.1002/jgt.20330},
  number = {2},
  journal = {Journal of Graph Theory},
  publisher = {Wiley},
  author = {Faria,  Luerbio and Herrera de Figueiredo,  Celina Miraglia and Sýkora,  Ondrej and Vrt’o,  Imrich},
  year = {2008},
  month = jun,
  pages = {145–161}
}

@article{Hammack2023,
 ISSN = {00029890, 19300972},
 URL = {https://www.jstor.org/stable/48662634},
 author = {Richard H. Hammack and Paul C. Kainen},
 journal = {The American Mathematical Monthly},
 number = {4},
 pages = {pp. 352--359},
 publisher = {[Taylor & Francis, Ltd., Mathematical Association of America]},
 title = {A New View of Hypercube Genus},
 urldate = {2026-02-05},
 volume = {128},
 year = {2021}
 
}

@misc{kainen2025,
  doi = {10.48550/ARXIV.2501.02400},
  url = {https://arxiv.org/abs/2501.02400},
  author = {Kainen,  Paul C.},
  title = {Skewness,  crossing number and {E}uler's bound for graphs on surfaces},
  publisher = {arXiv},
  year = {2025},
  copyright = {Creative Commons Attribution Non Commercial Share Alike 4.0 International}
}

@inbook{Shahrokhi1997,
  title = {Bipartite crossing numbers of meshes and hypercubes},
  ISBN = {9783540696742},
  ISSN = {1611-3349},
  url = {http://dx.doi.org/10.1007/3-540-63938-1_48},
  DOI = {10.1007/3-540-63938-1_48},
  booktitle = {Graph Drawing},
  publisher = {Springer Berlin Heidelberg},
  author = {Shahrokhi,  Farhad and Sykora,  Ondrej and Székely,  László A. and Vrt’o,  Imrich},
  year = {1997},
  pages = {37–46}
}

@article{Pach2003,
  title = {Unavoidable Configurations in Complete Topological Graphs},
  volume = {30},
  ISSN = {1432-0444},
  url = {http://dx.doi.org/10.1007/s00454-003-0012-9},
  DOI = {10.1007/s00454-003-0012-9},
  number = {2},
  journal = {Discrete and Computational Geometry},
  publisher = {Springer Science and Business Media LLC},
  author = {Pach,  J\'{a}nos and S\'{o}lymosi,  Jozsef and T\'{o}th,  G\'{e}za},
  year = {2003},
  month = aug,
  pages = {311–320}
}
 
\appendix

\section{Proof of Lemma~\ref{lemma_longplanepaths}}\label{app:A}

\longplanepaths*
\label{lemma_longplanepaths*}

\begin{proof}
    Let $\Gamma$ be a convex-geometric drawing of $G$. We construct a sequence of convex-geometric drawings $\Gamma_0,
\Gamma_1,\Gamma_2,\dots, \Gamma_{k}$ of subgraphs of $G$ in the following way.  We start with  $\Gamma_0 = \Gamma$. Then, we obtain $\Gamma_1$ from $\Gamma_0$ by removing the leftmost edge adjacent to each vertex.  Then, for $i\in\{2,\dots, k\}$ we obtain $\Gamma_i$ from $\Gamma_{i-1}$ by alternatingly removing the rightmost and leftmost edge at every vertex. As every step deletes at most $n$ edges, it is clear by our assumptions that $\Gamma_{k}$ contains at least one edge $e=uv$. In the following we assume that $k$ is odd and therefore $\Gamma_{k}$ was obtained from $\Gamma_{k-1}$ by removing the leftmost edge at every vertex. But since $e$ was not removed from $\Gamma_{k-1}$, it was not the leftmost edge of any of its endpoints. Let $l_u=uu'$ and $l_v=vv'$ be the leftmost edges of $u$ and $v$ respectively in $\Gamma_{k-1}$. Then, $l_u,e,$ and $l_v$ together form a plane path with three edges. Now, since $l_u$ and $l_v$ were not removed in the step from $\Gamma_{k-2}$ to $\Gamma_{k-1}$, it means that $l_u$ and $l_v$ were not the rightmost edges of $u'$ and $v'$ and we can again find two edges in $\Gamma_{k-2}$ that extend this path. Since we can repeat this argument $k$ times, in the end we obtain a path consisting of $2k+1$ edges.
\end{proof}

\end{document}